\newcommand{\ie}{\textit{i}.\textit{e}.,~}
\newcommand{\eg}{\textit{e}.\textit{g}.,~}
\newcommand{\A}{$\mathcal{A}$\xspace}
\newcommand\blfootnote[1]{%
  \begingroup
  \renewcommand\thefootnote{}\footnote{#1}%
  \addtocounter{footnote}{-1}%
  \endgroup
}
\newacronym{ms}{MS}{Malicious Server}
\newacronym{mc}{MC}{Malicious Client}
\newacronym{str}{STR}{Signed Tree Root}
\newacronym{poi}{PoI}{Proof of Inclusion}
\newacronym{pom}{PoM}{Proof of Misbehavior}
\newacronym{akr}{AKR}{Anonymous Key Request}
\newacronym{asr}{ASR}{Anonymous STR Request}
\renewcommand\footnotetextcopyrightpermission[1]{} 
\begin{document}

\title{Automatic Detection of Fake Key Attacks in Secure Messaging}

\author{Tarun Kumar Yadav}
\affiliation{%
  \institution{Brigham Young University}
   \country{}
  }
  \email{tarun141@byu.edu}

\author{Devashish Gosain}
\affiliation{%
  \institution{Max Planck Institute for Informatics}
    \country{}
  }
  \email{dgosain@mpi-inf.mpg.de}

\author{Amir Herzberg}
\affiliation{%
  \institution{University of Connecticut}
  \country{}}
  \email{amir.herzberg@gmail.com}

\author{Daniel Zappala}
\affiliation{%
  \institution{Brigham Young University}
  \country{}
  }
  \email{zappala@cs.byu.edu }

\author{Kent Seamons}
\affiliation{%
  \institution{Brigham Young University}
  \country{}
  }
  \email{seamons@cs.byu.edu}

\renewcommand{\shortauthors}{Tarun Kumar Yadav, Devashish Gosain, Amir Herzberg, Daniel Zappala, \& Kent Seamons}

\begin{abstract}
Popular instant messaging applications such as WhatsApp and Signal provide end-to-end encryption for billions of users. They rely on a centralized, application-specific server to distribute public keys and relay encrypted messages between the users. Therefore, they prevent passive attacks but are vulnerable to some active attacks. A malicious or hacked server can distribute fake keys to users to perform man-in-the-middle or impersonation attacks. While typical secure messaging applications provide a manual method for users to detect these attacks, this burdens users, and studies show it is ineffective in practice. This paper presents KTACA, a completely automated approach for key verification that is oblivious to users and easy to deploy. We motivate KTACA by designing two approaches to automatic key verification. One approach uses client auditing (KTCA) and the second uses anonymous key monitoring (AKM). Both have relatively inferior security properties, leading to KTACA, which combines these approaches to provide the best of both worlds. We provide a security analysis of each defense, identifying which attacks they can automatically detect. We implement the active attacks to demonstrate they are possible, and we also create a prototype implementation of all the defenses to measure their performance and confirm their feasibility. Finally, we discuss the strengths and weaknesses of each defense, the overhead on clients and service providers, and deployment considerations.
\end{abstract}

\begin{CCSXML}
<ccs2012>
   <concept>
       <concept_id>10002978.10002979.10002980</concept_id>
       <concept_desc>Security and privacy~Key management</concept_desc>
       <concept_significance>300</concept_significance>
       </concept>
 </ccs2012>
\end{CCSXML}

\ccsdesc[300]{Security and privacy~Key management}

\keywords{Secure messaging; MITM attacks; Signal; authentication}

\maketitle
\pagestyle{plain}

\section{Introduction}

Secure messaging applications provide billions of users with end-to-end encryption to ensure message privacy. A long list of applications provides this service, including WhatsApp, iMessage, Facebook Messenger, Skype, Signal, Threema, Wire, Wickr, Viber, and Riot. The application's underlying encryption protocols vary, though many use the Signal protocol or some derivation. \blfootnote{An extended version of our paper published at ACM CCS 2022.}

All the secure messaging applications listed above use a centralized server to exchange public keys and relay messages among users. The end-to-end encryption (E2EE) protocols assume the honest-but-curious model. 
When Alice wishes to communicate with Bob, she requests Bob's key from the server (and vice-versa).
A malicious or compromised server can launch a man-in-the-middle (MITM) attack against Alice and Bob by providing them with fake keys. The server then has access to the plaintext as it decrypts and re-encrypts each message that it relays between them.

To help counter these attacks, most of these applications (iMessage and Skype excepted) provide users a method to verify each others' public keys (or derived keys). This verification is typically done by manually comparing a key fingerprint or scanning a QR code of the fingerprint. Most applications do not prompt the users to do this at the start of a conversation but display a prompt if the keys change. Prior studies~\cite{schroder2016signal, herzberg2016can, vaziripour2017you} have found that users are generally oblivious to the need to verify public keys and are unlikely to authenticate, leaving them vulnerable to an attack.

While proof of fake key attacks on secure messaging platforms is hard to obtain, the vulnerability exists and may be exploited. Experience indicates that exploitation is only a matter of time. 
One example of surveillance of a secure messaging app occurred in 2018 when Dutch law enforcement eavesdropped on criminals using the IronChat application~\cite{ironchat}.
Outside of secure messaging, attacks have likewise led to MITM eavesdropping. In 2011, the Iranian government was suspected of obtaining a fraudulent public key certificate to eavesdrop on 300,000 Iranians~\cite{iran} accessing Gmail. Further, the Kazakhstan government recently began using a fake root CA to perform a MITM attack against HTTPS connections to websites including Facebook, Twitter, and Google~\cite{raman2020investigating}. It is also well-known that nations engage in surveillance and would like to
crack secure messaging applications. For example, revelations from leaked documents by Edward Snowden
indicate significant capabilities in the United States regarding surveillance of electronic
communication.

Currently, the only way to detect fake key attacks is to rely on users to perform key verification manually whenever they start a conversation with a contact and any time a contact updates their key.  Legitimate public key updates occur only when contacts re-install the messenger application. Thus, convincing users to always verify keys would almost universally confirm that a key update is legitimate. However, such repeated confirmations might cause user fatigue and a penchant to ignore key update warning messages. Our work aims to relieve users of this burden entirely. 

In this paper, we design and evaluate three novel approaches to detect fake key attacks automatically : (1) Key Transparency with Client Auditors (KTCA, Section~\ref{key_transparency_defense}), (1) Anonymous Key Monitoring (AKM, Section~\ref{Subsec:AKR}) and (3) Key Transparency with Anonymous Client Auditors (KTACA, Section~\ref{sec:KTACA}). These approaches leverage two ideas: client auditing and anonymity. We first explore how to use these ideas on their own to solve the problem, designing one defense that uses each idea (KTCA and AKM). We then combine the ideas into a third design that overcomes some of the limitations of the first two approaches (KTACA). 
The detection is probabilistic---detection is not immediate, but as time passes, it becomes improbable for the attacker to avoid detection.
For all three defenses our design goals include (1) avoiding reliance on third-parties for auditing, (2) using existing infrastructure where possible for simpler deployability, and (3) leaving existing secure messaging protocols largely unchanged, with only small extensions.

The first idea we build on, client auditing, is based on key transparency (\eg CONIKS~\cite{melara2015coniks}), an auditing approach similar to Certificate Transparency \cite{scheitle2018rise}. With key transparency, a service provider maintains a write-only log of public keys for each client device. Auditors detect when a provider equivocates by advertising different logs to different users. Typically, auditors are assumed to be well-connected servers, either run by third parties or service providers that collaboratively audit each other. We explore client auditing, which uses secure messaging clients as auditors rather than having dedicated third-party servers or service providers perform this function. Client auditing is necessary because there is no evidence that third parties would be willing to perform this function, nor that secure messaging providers would work together to audit each other.

The second idea we build on, anonymity, enables a device to access a service without revealing its identity to the server. Anonymity is helpful because current secure messaging providers know the identity of clients when they request public keys; the provider distributes public keys for each user to bootstrap secure connections. We explore anonymous key monitoring to make it difficult for the server to deliver fake keys to specific users while avoiding detection. We are not interested in anonymous communication among clients because users of secure messaging systems typically want to communicate openly with people they know. Instead, we explore a more limited notion of anonymity, which requires hiding the device identity from a service provider.

We combine these ideas into a third design (KTACA) that uses anonymous client auditing. With the first client auditing approach, clients must exchange auditing information because the service provider, knowing the identity of the clients, could otherwise equivocate by sending different information to each client.
Anonymous client auditing allows each client to avoid the overhead of exchanging auditing information. Since the service provider can't identify the clients, it can't equivocate without a high probability of being caught. Thus repeated queries are sufficient to detect equivocation.

Our contributions include:

\begin{enumerate}[topsep=1pt,noitemsep,leftmargin=1.5em]
\item A detailed description of fake key attacks and an implementation demonstrating the feasibility of the attacks. 
\item The design of three automated fake key defenses, one that uses client auditing (KTCA), a second that uses anonymous key monitoring (AKM), and a third that uses anonymous client auditing (KTACA), along with their advantages and limitations. These defenses are oblivious to users, freeing them of the responsibility to protect themselves against key attacks through manual key verification.
\item A formal security analysis of the defenses that explains which attacks they can detect.
\item An analysis of an implementation of the defenses to explore their performance and feasibility.
\item A comparison of the defenses and a discussion of their trade-offs, as well as implications for this line of research. 
\item A taxonomy of MITM and impersonation attacks (see Fig 1). Our analysis shows that key transparency detects some impersonation attacks without proof of the attack. This limitation has not been discussed previously.
\end{enumerate}

Service providers have an incentive to deploy automatic detection to protect their users since the primary goal of their service is to provide private communication.
Detecting attacks also protects the service provider's reputation.
Likewise, automatic detection is a strong deterrent for attackers attempting an attack
and for service providers acting maliciously (\eg responding to a government subpoena).
Finally, the defenses increase usability since the effort to manually verify a key can be limited to many fewer situations.

\section{Background and Related Work}

We first provide background on secure messaging applications and then discuss related
work that seeks to help users verify public keys when using secure messaging applications.

\subsection{Secure Messaging Applications}

Secure messaging applications use many different protocols to provide end-to-end encryption. One family of applications is based on the Signal protocol~\cite{signalProtocol, cohn2017formal}, hereafter referred to as Signal. These include the Signal app, WhatsApp~\cite{whatsappencryption}, Facebook Messenger~\cite{messengersecret}, Skype~\cite{skypeprivate}, and Riot\footnote{Riot uses Olm, an implementation of the Signal Double Ratchet algorithm, for one-to-one encrypted communication (https://gitlab.matrix.org/matrix-org/olm/blob/master/docs/olm.md).}, all of which directly use the Signal protocol, as well as Wire~\cite{wiresecurity}, and Viber~\cite{viberencryption}, which use their own implementation but follow the same concepts.
Another family of applications (\eg iMessage, Threema, and Wickr) uses a proprietary protocol that bootstraps encryption by exchanging public keys using a central server, similar to the initialization used by the Signal protocol. 

Our work applies to all of these systems since they all use a central server to exchange public keys and route messages between users. Our focus is on ensuring that the public
keys exchanged through the central server are verified as authentic, rather than fake keys substituted by
an attacker. Most of these apps (except iMessage and Skype) use some manual system to verify keys,
and all could use an automated system such as those we describe.
Our work may also apply to Telegram, which uses a proprietary protocol based on Diffie-Hellman, with messages exchanged through a central server. Telegram also includes a method to authenticate the
exchanged Diffie-Hellman parameters, which would likewise benefit from automation.

Chase et al.~\cite{chase2020signal} use anonymous authentication to authorize changes to an encrypted membership list stored on a messaging server.  Two of our defenses (Section~\ref{Subsec:AKR} and Section~\ref{sec:KTACA}) require anonymous queries to hide the identity of the requestor, which is different from anonymous authentication.

\subsection{Verifying Keys}
\label{related_work_verifying_keys}
Secure messaging applications often contain a method for users to verify each others' public keys, such as scanning a QR code from each others' phones if they are co-located or reading their key fingerprints over a voice call.
This process has been called an {\em authentication ceremony}, and typical messenger designs
only prompt users to perform it when their public keys change.

Prior research shows that users do not understand the need for the authentication ceremony and find it difficult to perform.
Schroder et al. \cite{schroder2016signal} demonstrated that most Signal users failed to correctly verify their conversation partner's key due to usability issues and an incomplete mental model.
Herzberg and Liebowitz~\cite{herzberg2016can} conducted a laboratory user study that provided high-level information about the risks of secure communication. Only 13\% of the users could complete the authentication ceremony successfully.
Similarly, Vaziripour et al. ~\cite{vaziripour2017you} conducted a laboratory study where pairs of participants received high-level instructions to make sure they were communicating with the person they intended. Only 14\% of the participants completed the ceremony.
These studies indicate that users don't understand the risk of a MITM attack when using secure messaging applications, do not understand that the authentication ceremony helps them thwart an attack, and have difficulty finding and completing the authentication ceremony.

Several researchers have recently designed and evaluated improvements to the authentication ceremony interface.
Vaziripour et al.~\cite{vaziripour2018action} modified the Signal application UI to encourage users to perform the authentication ceremony and made the ceremony easier to find and use.
They reported that 90\% of the participants could find and complete the ceremony using the redesigned version of Signal.
Even with these improvements, it is still a burden and unrealistic to expect users to perform the ceremony all the time. Wu et al.~\cite{wu2019something} instead redesigned the Signal application UI based on risk communication principles that help the user decide whether to perform the authentication ceremony, taking into account risk likelihood and severity, response efficacy, and cost. This approach showed improvements in user understanding of the ceremony and the ability of users to make decisions based on their judgment of these factors.
 
Our work seeks to automatically verify public keys, relieving the burden of an authentication ceremony on users by distinguishing between legitimate key changes and attacks. One approach in this direction uses social media accounts to provide additional channels for verifying keys. Keybase is a key directory that links a user's social media accounts to their encryption keys to increase confidence that a received public key belongs to the right person.
Vaziripour et al.~\cite{vaziripour2019bother} semi-automated the authentication process in the Signal application using social media accounts, similar to Keybase.
They found that automating the authentication ceremony and distributing trust with additional service providers is promising. However, users were skeptical of using social media accounts due to a lack of trust, and thus recommended that more trustworthy third-party actors are needed for this role. We also note that using social media accounts requires users to have these accounts. In some cases, social media companies could collude with the messaging application (such as Facebook owning WhatsApp). Moreover, this approach puts the onus on the user to manually verify that the social media account owner is the person they are trying to authenticate.

A candidate for automating the ceremony is Key Transparency (KT).
CONIKS~\cite{melara2015coniks} (and~\cite{chaseseemless} that builds on it) describes a KT system designed for secure messaging that partly inspired Google's Key Transparency project\footnote{Refer \url{https://github.com/google/keytransparency}}.
The CONIKS architecture has each secure messaging provider maintain a public ledger of their user's keys, and providers audit each other to detect equivocation. It is assumed that clients can communicate with each other independently from their provider, enabling them to contact the auditors out-of-band and to alert other parties if any equivocation is detected. Today's secure messaging providers have not implemented ledgers or auditing, nor do they inter-operate in the way CONIKS envisions. We propose a defense KTCA using KT wherein the clients act as auditors instead of the service providers. 
KT is analogous to Certificate Transparency (CT) \cite{scheitle2018rise}, an approach for detecting fake  certificates. 
In both CONIKS and KTCA, clients audit their service provider to verify their key is in the transparency log. However, they differ in how they audit the server for equivocation\textemdash detecting when the server advertises a different log to different clients.
CONIKS assumes multiple non-colluding providers publish each others’ STRs, and clients download STRs from multiple providers and compare them for equivocation. KTCA does not require multiple providers since we believe it is infeasible given the current lack of cooperation (and lack of incentives for cooperation) among messaging providers. KTCA shows that single-provider detection of equivocation is feasible, and our analysis shows its limitations. Also, KTCA is customized for messaging apps, such as not needing a proof-of-absence as CONIKS provides, leading to some differences in the Merkle Tree design.

Unger et al.~\cite{unger2015sok} produced a systemization of knowledge (SoK) for secure messaging and considered security, usability, and adoption properties. They evaluate many trust establishment approaches, including centralized key directory systems like Signal. They evaluated whether approaches could prevent or detect operator (\ie service provider) MITM attacks and noted the key directories are vulnerable to attack. Their evaluation included the addition of auditable logs like CONIKS to a key directory to detect attacks. Our work builds on their evaluation in several ways. First, we provide an in-depth discussion on how to adopt CONIKS to a single operator. We also provide two additional approaches to detect attacks automatically. Their SoK focuses entirely on manual approaches to comparing fingerprints.

\section{System model}
\label{sec_threat_model}

There are two entities involved in the messaging system, (1) client messaging applications
on users' devices, and (2) the server.  
Clients can communicate with each other (1) via messages in the secure messaging application that are encrypted and routed through the server, and (2) via messages that are sent directly between the clients without being routed through the server (via some out-of-band communication channel). 
Whenever the secure messaging app is installed on the client, a public key is generated and published on the server. A new key can be associated with a phone number whenever the app is re-installed on the same device or another device due to a device upgrade, lost device, etc. 
A client can become disconnected from the system when it loses network access.

The server relays application-related messages between the clients and distributes public keys.
We refer to messages between clients that flow through the server as {\em in-band}, and messages between clients that flow directly without going through the server as {\em out-of-band}.
The server knows the sender and recipients' identity (phone number, public key, IP address) for all communications that it relays between clients. 
The server also knows the contact list for each client.

We assume perfectly synchronized clocks and bounded delays for all communication. 
Time is divided into regular intervals or epochs, such as an hour or a day.

\section{Adversary Model}

The adversary \A controls the server, giving it control over public key distribution and access to all encrypted messages plus metadata that flow through the server.  
Also, the adversary is an active global attacker that can launch a MITM attack against any insecure Internet communication globally to read, modify, inject, and block messages.

The adversary has no control over a user's client application. Because the cryptographic keys are generated at the client, and public keys are stored on the server for distribution to other users, the adversary has access only to public keys. We exclude a compromised client from our threat model because attacks on the client can be mitigated through an open-source client app that the provider does not control, which helps prevent the attack. Also, an attack on the client app can be detected by auditing since many researchers have access to the app and can inspect it.

The adversary does not attack the human user, so social engineering attacks are outside the scope of the paper.

\subsection{Adversary type and goals} 
The adversary could be (1) law enforcement or an oppressive regime coercing the server to conduct an attack, or (2) hackers compromising the server to conduct an attack. 
The adversary's goal is to compromise message confidentiality and integrity.
The adversary may be interested in compromising communication between a specific pair of users (Alice and Bob), or in compromising communication between one user (Alice) and all of the users with whom Alice communicates (Alice's contacts). 

\subsection{Description of Attacks} 
\label{sec_attack_desc}
\A generates fake public/private key pairs and distributes the fake public keys, instead of the original users' public keys, to conduct attacks on some conversations. Throughout the paper, we refer to this as a fake key attack, referring to a fake long-term public key of a user. Once \A has successfully distributed a fake key, it can generate any ephemeral symmetric keys needed to have read and write access to all messages in a conversation. If \A is able to complete a fake key attack, there are two specific attacks it can launch in existing systems that do not employ our defenses:

\paragraph{Man In The Middle Attack (MITM)} \A can launch a MITM attack against Alice and Bob by impersonating Alice to Bob and simultaneously impersonating Bob to Alice. \A can read, modify, and inject messages into the conversation.

To conduct this attack on a new connection, \A first generates two fake public keys. When Alice attempts to create a secure connection with Bob by retrieving Bob's public key, \A suppresses Bob's key and presents one of the fake keys (as Bob's key) to Alice. When Alice sends the first message to Bob containing her public key, \A replaces Alice's key with the other fake key.
The same pattern occurs for Bob if he initiates the conversation with Alice. \A can also launch a MITM attack against clients with existing connections by sending fake key updates to both Alice and Bob.

\paragraph{Impersonation Attack} 
\A launches an impersonation attack against Alice by either impersonating {\em as Alice} to her contact or impersonating a contact {\em to Alice}. To impersonate as Alice, \A provides a fake key for Alice to her contact. To impersonate to Alice, \A provides a fake key for her contact to Alice.
As an impersonator, \A can be either the initiator or the respondent of a conversation.
Alice is oblivious to the impersonation attack and is unable to communicate with her contact.

This attack applies to both new and existing secure connections. 
For new connections, \A distributes a fake key as the conversation begins. For existing connections, \A sends out a fake key update. The details depend on whether \A is impersonating to Alice or as Alice, and whether \A is the initiator or respondent in the initial communication.

The impersonation and MITM attacks just described illustrate a \textit{pair-targeted attack} where \A attacks a single pair of participants (Alice and Bob) as shown in Fig.~\ref{fig:ThreatModel} parts (a) and (c). 
A variation of these attacks is a \textit{client-targeted attack} where \A launches the attack on Alice and \textit{some or all} of her contacts as shown in Fig.~\ref{fig:ThreatModel} parts (b) and (d).

\begin{figure}
	\centering
	\includegraphics[width=0.45\textwidth]{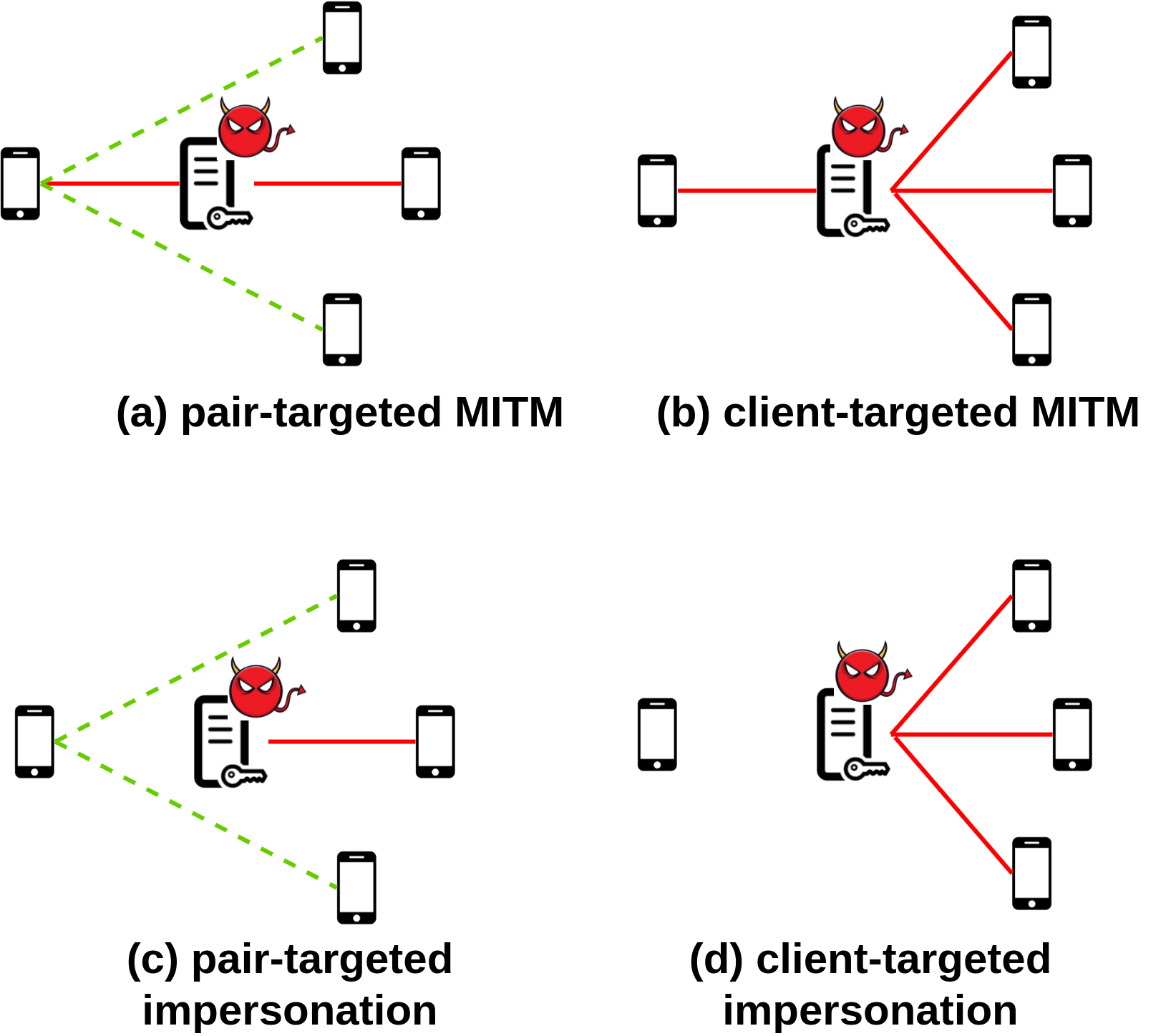}
	\caption{Four types of fake key attacks. Each diagram represents Alice (left), her contacts (right), and an adversary (center).
	The green lines represent secure connections; the red lines represent compromised connections.
    Note that in each situation, there may be other pairs of clients communicating securely that are not shown.}
	\label{fig:ThreatModel}
\end{figure}

\paragraph{\textbf{Roadmap}}
The rest of the paper is organized as follows: 
We present three defenses---KTCA, AKM, and KTACA---in Sections~\ref{key_transparency_defense}, ~\ref{Subsec:AKR}, and ~\ref{sec:KTACA}, respectively. KTCA has stronger assumptions than AKM, but it has better security properties than AKM. We design KTACA to combine the strengths of KTCA and AKM. Table~\ref{tab:comparison} compares  how these defenses perform against attacks described in Section~\ref{sec_attack_desc}. Section~\ref{subsec:KeyHist} presents short-lived attack monitoring, which is used in all of our main defenses to detect an adversary that launches a fake key attack and quickly restores the correct key to avoid detection. We present a security analysis of the defenses in their respective sections and a performance analysis in Section~\ref{sec:proof_of_concept_performance}.

\begin{table*}
	\centering
	\caption{Comparison of Defenses showing who detects/prevents each attack}
	\footnotesize
	\label{tab:comparison}
	\begin{tabular}{l|ccc}
	\toprule
		\textbf{Attack} & \textbf{KTCA} & \textbf{AKM} & \textbf{KTACA}  \\
		\midrule
		
		
		
		
		
		
	
	
		Pair-targeted MITM & All clients w/ PoM & All victim clients &  All victim clients w/ PoM  \\
		\midrule
		Client-targeted MITM & All clients w/ PoM (equivocation)&  &  All victim clients w/ PoM (equivocation)  \\
		& All victim clients (non-equivocation) & All victim clients & All victim clients (non-equivocation)\\
		\midrule
		Pair-targeted impersonation & All clients w/ PoM & All victim clients & victim client w/ PoM\textsuperscript{\textdagger}   \\
		\midrule
		Client-targeted impersonation & All clients w/ PoM (equivocation)&  &  All victim clients w/ PoM (equivocation)  \\
		& victim client (non-equivocation) $^*$&victim client$^*$ & victim client (non-equivocation)$^*$\\
\bottomrule
	  \end{tabular}

	  \vspace{2mm}
	*~= The victim client whose fake key is distributed detects the attack. \hspace{1cm}
	\textdagger~= The victim client who receives a fake key detects the attack.
	
\end{table*}

\section{Key Transparency with Client Auditors (KTCA)}
\label{key_transparency_defense}

Our first approach to defending against fake key attacks is an adaptation of key transparency that relies on secure messaging clients to audit the server instead of the usual approach of relying on multiple service providers to audit each other (\eg CONIKS and Google Key Transparency).  
We believe client auditors are more realistic to deploy since providers do not currently cooperate, and there is no other non-colluding auditing infrastructure available. 

Key transparency leverages a transparency log of public keys to detect when \A advertises a fake public key for a client. It also detects if \A attempts to avoid detection by equivocating, \ie advertising different logs to different clients. The secure messaging clients perform the monitoring. Some clients can be offline during some epochs; therefore, different clients will be available to monitor keys and check for equivocation in every epoch.

\subsection {Definitions and Assumptions}
We say that two entities have a $\delta$-connection in a given epoch if every message sent by one directly to the other is received with maximal delay $\delta$.
We say a client is \textit{benign} if it has the correct KTCA implementation.

In every epoch $e$, let $G_e$ be a graph whose nodes are the benign clients and whose edges are pairs of benign nodes that are contacts of each other. Because the contacts are connected through the server, they have a delay of $2\cdot\delta$. Notably, the server has $\delta$ connections with all clients in $G_e$. 

Due to clients going offline/online during epochs, the graphs $G_{e}$ and $G_{e+1}$ likely will not have the same set of nodes. KTCA assumes at least one overlapping benign client across two consecutive epochs. 
Having at least $N/2$ benign clients online every epoch, where $N$ is the total number of benign clients, guarantees an overlap of at least one client across two consecutive epochs.
Currently, messaging applications such as WhatsApp have $70\%$ of clients active every day~\cite{whatsappStats}.

We also assume  $2\cdot(diam(G_e)+1)\cdot\delta < len(e)$, where $diam(G_e)$ is the maximum number of edges for a shortest path between any two clients in $G_e$. 
This assumption accounts for the time it takes for a client to retrieve information from the server and share it with all of the other clients in the network. 
If any client sends a message to its contacts, who then repeatedly relay the message to their contacts, the message will reach all clients in $G_e$ in one epoch.
This assumption is reasonable because most delays on the Internet are short---rarely over 200 ms. 

\subsection {Design}
  
    The server maintains a Merkle binary prefix tree of all the registered client's public keys, inspired by CONIKS~\cite{melara2015coniks}. 
    Each node corresponds to a unique prefix $i$. Each branch of the tree appends a 0 (left child) or a 1 (right child) to the parent's prefix.
    
    
    A privacy goal for a Merkle tree is that an attacker cannot use the Merkle tree data that proves user $i$ owns key $k$ to determine whether another user $j$ exists in the tree. 

    There are three node types in the tree created using a collision-resistant hash function\footnote{In the exposition, we use a keyless hash $H$ for simplicity, which suffices under the Random Oracle Model. For security in the standard model, the protocol should be interpreted as using a keyed hash function $h_k$, where $k$ is a public random string.} $H()$. 
    Leaf nodes are defined as:
    \begin{equation*}
        h_{leaf} =  H(k_{leaf}||i_{client}||\ell|| H(client,public\_key_{client}))\\
    \end{equation*}
    where $k_{leaf}$ is a leaf-specific nonce, $i_{client}$ is the index for a $client$, $\ell$ is the depth of the leaf node in the tree, and $client$ is a unique identifier (\ie phone number) for the user.
    
    For a leaf node, index $i_{client}$  is a hash of the client's identifier.
    \begin{equation*}
        i_{client} =  H(client)
    \end{equation*}
    
    Interior nodes are defined using their two children as: 
     \begin{equation*}
        h_{interior}=H(k_{interior}|| h_{child.0}||h_{child.1}|| i_{interior} || \ell )
    \end{equation*}
    where $k_{interior}$ is an interior-specific nonce, $i_{interior}$ is the prefix for the interior node, $\ell$ is the depth of the interior node in the tree. 
    
    The $k_{leaf}$ and $k_{interior}$ nonces ensure that the input strings to the hash function for leaf and interior nodes differ, and the outputs can only match if there is a collision. Some leaf nodes do not correspond to a registered client and are simply random values.

    To optimize space in the tree, the leaf node corresponding to a registered $client$ could be placed at a depth $\ell$ where the first $\ell$ bits of $i_{client}$ are a unique prefix in the tree.
    This optimization allows an offline attacker to infer other potential clients nearby in the tree. 
    CONIKS incorporated a verifiable unpredictable function in its design to eliminate this privacy leak. 
    The function can be verified with the public key but only computed using the private key. This prevents an offline attacker from guessing possible usernames and inferring that they might be present in nearby nodes.
    
    Rather than adopt this method from CONIKS, we use a different design that sacrifices some storage for reduced computation. To prevent the above privacy leak, we represent empty leaves as random values to be indistinguishable from non-empty leaves. This design choice means we cannot support proof of absence like CONIKS. 
    Proof of absence is not necessary for secure messaging apps because the messaging server is responsible for maintaining the tree, and every registered client must have a key in the tree. We also place each node for a $client$ at a depth corresponding to the first $\ell+r$ bits of $i_{client}$, where $\ell$ is the first $\ell$ bits of index $i_{client}$ that are a unique prefix in the tree, and $r$ is chosen randomly (uniformly) between $1$ and $\ell$ by the server.
    The extra overhead is at most twice the number of hashes to a node in the tree. The design ensures that a sibling leaf for a non-empty leaf is always either an interior node or a random-valued leaf node. An attacker cannot reliably infer whether another client with the same prefix exists in the tree. 

    At regular intervals (epochs), the server generates a \textit{\acrfull{str}} by signing the root of the Merkle tree, along with other metadata, such as a hash of the previous epoch's \acrshort{str} and epoch number.

A Merkle tree provides an efficient method for clients to verify that their key is included in the current tree without obtaining a full copy of the tree. 
The server provides a client with the current \acrshort{str} along with a \acrfull{poi} for their key.
The \acrshort{poi} contains the leaf node corresponding to the client's key and the hash of each sibling sub-tree along the path from the leaf node to the root of the Merkle tree. 
Given a tree with $N$ nodes, there are $log(N) - 1$ interior nodes along the path from the leaf to the root of the tree.
To verify the \acrshort{poi}, the client begins at the leaf, hashes the hash of the leaf along with the hash of the leaf of the sibling node to produce the value of the parent node. The process of computing the value of the parent node using the current contents of a node and the hash of its sibling node provided in the \acrshort{poi} continues until the root of the tree. The computed value for the root of the tree is verified using the most recent \acrshort{str}. 

\subsubsection {Client auditors}

The defense adds new \textit{auditing messages} that are exchanged between the clients to detect fake key attacks. These messages are in addition to the normal messages between clients that flow through the server and use all of the same security mechanisms. These messages represent the edges of graph $G_e$.

The following describes the key monitoring and server auditing performed by all clients. All clients perform the following: (1) monitor their own key, (2) monitor the keys for all their contacts, (3) audit the server to detect equivocation, and (4) perform short-lived attack monitoring as needed.

\begin{enumerate}[topsep=1pt,noitemsep,leftmargin=1.5em]
    \item To monitor their own key, clients request an \acrshort{str} and \emph{\acrfull{poi}} for their key from the server at the beginning of each epoch.
    Clients verify (a) that the server is publishing a linear history of \acrshort{str}s by confirming the previous \acrshort{str}'s hash is in the current \acrshort{str}, (b) that the \acrshort{str}'s signature is valid, and (c) that their public key is in the tree using \acrshort{poi}.
    If a client does not receive an \acrshort{str} or \acrshort{poi} within $2\cdot\delta$ time after the beginning of an epoch, or the \acrshort{str} is invalid, it considers this an attack, and the client disconnects from the server.
    If a client disconnects from the graph $G$, then it requests \acrshort{str}s and \acrshort{poi}s for all missed epochs when it comes online and re-joins the graph. Then it verifies the validity of all \acrshort{str}s and \acrshort{poi}s.
    
    \item A client monitors the keys of its contacts on each new key lookup or key update. The server gives client $j$ the key for client $i$ that is included in $STR_e$ and its corresponding \acrshort{poi}, which client $j$ can use to verify the key is in the current tree. If client $i$ updates the key after $STR_e$ is generated, then the server also sends client $j$ the updated key, which is not included in $STR_e$ but will be included in $STR_{e+1}$. 
    

    \item Clients audit the server using in-band auditing messages for equivocation by sending $\acrshort{str}_e$ to their contacts during each epoch, immediately upon receiving it. If a client detects two conflicting, signed \acrshort{str}s during an epoch, this constitutes a \emph{\acrfull{pom}}.
    Clients detecting or receiving a \acrshort{pom} immediately relay the \acrshort{pom} to their contacts so that all clients in the network quickly obtain the \acrshort{pom}.
    Clients that do not receive $\acrshort{str}_e$ from the server during a given epoch instead relay the first \acrshort{str} they receive from their contacts.
    
    \item On every key update a client receives, it performs \textit{short-lived attack monitoring} (\S\ref{subsec:KeyHist}) to detect attacks where \A quickly restores a correct key. \textit{Short-lived attack monitoring} detects the attack where \A provides a fake key in epoch $e$ and restores the key before epoch $e+1$, so the fake key never appears in the subsequent \acrshort{str}.
    

\end{enumerate}

\subsection {Analysis}

If \A does not equivocate, then every client in the graph receives the same \acrshort{str} during every epoch. Each client can verify their key is in the \acrshort{str} using the \acrshort{poi}.

If \A equivocates, then at least one pair of clients has conflicting \acrshort{str}s. Namely, at least one edge in the connected graph has a $2\cdot\delta$ connection between two clients that receive different \acrshort{str}s. The clients on each edge with different \acrshort{str}s will detect the equivocation and forward the \acrshort{pom} to their contacts.

The detection of conflicting \acrshort{str}s by a client proves that at least one fake key exists in the system. However, the client cannot determine which keys are fake, only that the server has equivocated.
The owner of a key is the only one that can confirm their correct key is present in an \acrshort{str}. 
Other clients can confirm that the key exists in the \acrshort{str}, but they cannot confirm it is the correct key.


In all cases (a)-(d) in Fig.~\ref{fig:ThreatModel}, if \A attacks by equivocating and gives different STRs to clients connected to the compromised edges, the attack is detected with a \acrshort{pom} by all clients that are connected to graph $G_e$ and exchange their STRs. It should be noted that \A could attack an edge $e$ by partitioning the graph $G$ into subgraphs by either (1) disconnecting multiple edges or (2) executing MITM on multiple edges in the same epoch in a way that the clients of edge $e$ come in different subgraphs $G_1$ and $G_2$. In this case, attacks on edges within the subgraphs $G_1$ and $G_2$ are detected with PoM, but attacks on edges connecting the two subgraphs cannot be detected quickly. To prevent this attack's detection, \A has to permanently disconnect these connections or execute a MITM for all connections between any two clients across these subgraphs. 

Suppose two clients from different subgraphs form a secure connection that connects graph $G$. In that case, the attack is eventually detected with a \acrshort{pom}, which is a strong deterrent for \A executing this attack.
If \A does not partition the graph, the attacks are detected within one epoch as described in Theorem 5.2.

To illustrate the graph partitioning attack, consider the case where \A conducts a client-targeted MITM attack. \A creates a fake key for Bob and delivers it to all of Bob's contacts. To avoid detection during auditing, \A sends Bob an STR generated with his key, but all other clients receive a second STR generated with the fake key. To avoid detection, \A also modifies all of Bob's outgoing auditing messages (since it is acting as a MITM), and replaces the first STR with the second STR, which matches the STR that all the other clients have. For the incoming STRs, \A replaces the second STR. In this case, \A conducts a successful MITM even though Bob and his contacts verify that their conversations are secure. As noted above, \A must continue this attack indefinitely to avoid detection and expand it to include any new clients that contact Bob.

In client-targeted attacks on Bob, \A could hand out fake keys and not equivocate. For a client-targeted MITM attack, all the victim clients detect the attack. For a client-targeted impersonation attack, only Bob detects the attack. However, the impersonation attack can continue since there is no way to notify Bob's contacts that they are under attack automatically. Bob has to notify his contacts manually to make them aware of the attack because \A isolates Bob from communicating with anyone via the network. There is no \acrshort{pom} in either case. 
If \A refuses to hand out an STR, a client detects the attack without any \acrshort{pom}.

\label{section:KTCA_detailed_analysis}
\begin{lemma}\label{lem:KTCA1}
\textit{PoI-Lemma:} For any epoch $e$, it is not feasible for \A to generate an \acrshort{str} and two \acrshort{poi}s that  prove the inclusion of two different public keys in the \acrshort{str} for the same $client$.
\end{lemma}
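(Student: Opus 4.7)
\medskip

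The plan is a reduction to collision resistance of $H$. Assume for contradiction that $\mathcal{A}$ produces an $\acrshort{str}$ with signed root $r$ together with $\acrshort{poi}$s $\pi_1, \pi_2$ that both verify inclusion of two distinct public keys $pk_1 \neq pk_2$ under the same $client$. Under the stated collision-resistance assumption on $H$ (modeled as a random oracle in the exposition), I would extract from this data an explicit collision of $H$.

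First I would set up the two leaf computations. Both paths share the same index $i_{client} = H(client)$, so the claimed leaves are $h_{leaf}^{(j)} = H(k_{leaf}^{(j)} \,\|\, i_{client} \,\|\, \ell_j \,\|\, H(client, pk_j))$ for $j = 1, 2$. Since $pk_1 \neq pk_2$, the inner values $H(client, pk_1)$ and $H(client, pk_2)$ either already collide, or are distinct; in the former case we are done, and in the latter the preimages of $h_{leaf}^{(1)}$ and $h_{leaf}^{(2)}$ are distinct strings. If $h_{leaf}^{(1)} = h_{leaf}^{(2)}$ this is the sought collision; otherwise the two leaf hashes differ but both paths must end at the same root $r$.

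Next I would climb the two authentication paths in lockstep. At each level the verifier recomputes a parent as $H(k_{interior} \,\|\, h_{child.0} \,\|\, h_{child.1} \,\|\, i \,\|\, \ell)$ using the provided sibling. Since the running hashes start distinct and end equal at $r$, there is a first level at which they coincide; at that level the two inputs to $H$ differ (the child hashes being combined were still distinct one step earlier), which is a direct collision of $H$.

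The main obstacle I expect is the bookkeeping when $\ell_1 \neq \ell_2$, or when one path's leaf hash must be compared against the other path's interior hash. Here the design of the hash inputs does the work: the leaf-vs-interior nonces $k_{leaf}, k_{interior}$ together with the explicit depth $\ell$ inside every hash input provide full domain separation, so any equality of outputs across types or depths still corresponds to distinct preimages, hence a genuine collision. A light padding argument (aligning the shorter path with the longer one up to the claimed root) makes the lockstep comparison go through cleanly and concludes the reduction.
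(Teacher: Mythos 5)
Your proposal is correct and follows essentially the same route as the paper's proof: a reduction to collision resistance of $H$, using the distinct nonces $k_{leaf}$, $k_{interior}$ (and the explicit depth in each hash input) for domain separation between leaf and interior nodes. If anything, your lockstep climb to the first level where the two running hashes coincide is slightly more complete than the paper's argument, which only examines two specific collision locations (the shared leaf, or a fake leaf coinciding with an interior node on the path); both arguments terminate in the same contradiction with the collision-resistance assumption.
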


\begin{proof}

There are two methods for \A to distribute an STR with two \acrshort{poi}s where one proves a client owns $public\_key_{client}$ and the other proves the client owns $public\_key_{fake}$. 
    
Method 1: The \acrshort{poi}s map the conflicting public keys to the same leaf node in the Merkle tree. 
    \begin{equation*}
        \begin{aligned}
        h_{leaf} = H(k_{leaf}||i_{client}||\ell|| H(client,public\_key_{client})) \\
        = H(k_{leaf}||i_{client}||\ell|| H(client,public\_key_{fake}))
        \end{aligned}
    \end{equation*}
If $public\_key_{fake}$ and $public\_key_{client}$ are different, then the outputs of the inner collision-resistant hash function must differ. Thus, the inputs to the outer hash function $H$ differ, and the outputs can only be equivalent if a collision occurs.

Method 2: The first \acrshort{poi} maps the real public key to a leaf $h_{leaf}$ at depth $\ell$ in the Merkle tree. The second \acrshort{poi} maps the fake public key to a fake leaf $h_{fakeleaf}$ that equals an $h_{interior}$ node at depth $\ell^\prime$ on the path to $client$ where $\ell^\prime$ is less than $\ell$. If they are equal, then the fake key validates as a leaf node in the Merkle tree.
    \begin{equation*}
        \begin{gathered}
        h_{interior} = H(k_{interior}||h_{child.0}||h_{child.1}|| i_{interior} || \ell^\prime ) \\
        h_{fakeleaf} = H(k_{leaf}||i_{client}||\ell^\prime|| H(client,public\_key_{fake}))
        \end{gathered}
    \end{equation*}
The inputs to the hash function $H$ differ since, by definition, the input strings begin with a different nonce $k_{interior} $ and $k_{leaf}$. The outputs can only be equivalent if a collision occurs.

Both methods depend on a collision for $H$. Therefore, if it is feasible for \A to come up with an \acrshort{str} and two valid \acrshort{poi}s, corresponding to two different public keys for the same client, then this implies that it is feasible to find collisions in $H$, contradicting the assumption that $H$ is a collision-resistant hash function. 

\end{proof}


In Appendix section~\ref{sec:kt_proofs_appendix}, theorem~\ref{thm:KTCA1} considers clients to be online during every epoch and a key update happens at a fixed time during an epoch; theorem~\ref{thm:KTCA2} assumes only fixed-time key updates. 
The following theorem considers that clients can go offline during epochs and update their keys anytime during an epoch. From our definition, $G_e$ is a graph whose nodes are the benign
clients and whose edges are pairs of benign nodes that are contacts
of each other.

\begin{theorem} \label{thm:KTCA3}
Let $j$ be a benign client connected in epochs $e$ and $e'$ where $e<e'$, and $i$ be a benign client connected in epochs $e^*$ and $e"$ where $e^* < e < e"$, and let $t_e$ denote the time that epoch $e$ begins. Let $e_m = \max(e', e")$. Assume that $j$ receives $i$'s key from the server at time $t'$ within the epoch $e$. Then one of the following holds:

\begin{enumerate}
\item Client $i$ detects that the server is corrupt, at or before $t_{e"}+2\cdot\delta$.
\item  All clients in $G_{e_m}$ detect that the server is corrupt, with PoM, at or before $t_{e_m}+2\cdot(diam(G_{e_m}+1))\cdot \delta$.

\item Client $j$ detects that the server is corrupt at or before $t_{e'}+2\cdot\delta$.

\item Client $j$ receives $i$'s public key during epoch $e$.
\end{enumerate}
\end{theorem}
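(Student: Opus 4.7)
The plan is to argue by contrapositive: assuming cases~1, 2, and 3 all fail, I will deduce case~4. Concretely, suppose $j$ receives at time $t'$ within epoch $e$ a key $k'$ for $i$ together with a signed $STR_e$ (denote it $S_j$) and a valid PoI binding $k'$ to client $i$ in $S_j$; I need to show $k' = k_i$, the true key of $i$. Since $i$ is connected in epoch $e''$, it requests $STR_e$ (along with all other missed STRs) and a PoI for its own key upon reconnection. If the server fails to deliver a valid response within $2\delta$ of $t_{e''}$, client $i$ flags corruption and case~1 fires; otherwise, let $S_i$ be the $STR_e$ that $i$ receives. Two subcases now arise.

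In Subcase~A, where $S_i = S_j$, the PoI that $i$ obtains for its own key in $S_i$ must verify against $k_i$ (else $i$ detects corruption and case~1 holds). Applying Lemma~\ref{lem:KTCA1} (the PoI-Lemma) to the single STR $S_i = S_j$: no two valid PoIs for different keys at the same client can coexist in one STR, so the PoI that $j$ already verified against $S_j$ for client $i$ must bind $k_i$, giving $k' = k_i$, which is case~4. In Subcase~B, where $S_i \neq S_j$, the server has produced two validly signed but distinct STRs for epoch $e$: this is equivocation. Both $i$ and $j$ attach their respective epoch-$e$ STRs to the audit messages they relay to contacts upon reconnection. By epoch $e_m = \max(e', e'')$ both are in $G_{e_m}$, and within $2\delta$ of $t_{e_m}$ each of them has broadcast their $STR_e$ to neighbors. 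By the bounded-diameter assumption the two conflicting STRs reach a common benign client, forming a PoM, which then floods $G_{e_m}$; all propagation completes within $2(diam(G_{e_m})+1)\delta$ of $t_{e_m}$. This is case~2.

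Case~3 absorbs the residual possibility that $j$'s own first-of-epoch monitoring in $e'$ fails---for instance, the server cannot hand out a consistent $STR_{e'}$ with a valid signature, a valid linear-history hash chaining back to $S_j$, and a valid PoI for $j$'s own key---which $j$ detects within $2\delta$ of $t_{e'}$ by the monitoring rule. Together with the two subcases above, every trajectory leads to one of the four conclusions.

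The main obstacle I expect is the timing accounting in Subcase~B: careful bookkeeping is required to show that the $2(diam(G_{e_m})+1)\delta$ bound simultaneously covers (i) the one-hop dissemination of each of $i$'s and $j$'s $STR_e$ values to their immediate neighbors at the start of $e_m$, (ii) the path along which the two conflicting STRs first meet, and (iii) the subsequent flood of the resulting PoM across $G_{e_m}$. A secondary subtlety is that the adversary may attempt to suppress in-band audit messages to effectively partition $G_{e_m}$; the theorem sidesteps this by only quantifying over clients that remain in the connected graph $G_{e_m}$, but I would need to make explicit that the $\delta$-connection assumption on edges in $G_{e_m}$ rules out silently dropping audit messages without detection.
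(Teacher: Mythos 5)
Your decomposition---contrapositive, Subcase~A resolved by the PoI-Lemma when the two copies of $STR_e$ agree, Subcase~B as equivocation---matches the paper's structure, and your Subcase~A together with the handling of cases~1 and~3 is essentially the paper's argument. The gap is in Subcase~B. You assert that ``by epoch $e_m=\max(e',e'')$ both are in $G_{e_m}$'' and that each of $i$ and $j$ re-broadcasts its \emph{epoch-$e$} STR there. Neither holds. The hypotheses only place $j$ in $G_e$ and $G_{e'}$ and $i$ in $G_{e^*}$ and $G_{e''}$, so whichever of $e'$, $e''$ is the smaller, that client need not be online at $e_m$ at all (the paper accordingly names only a single client $c_m\in\{i,j\}$ guaranteed to be in $G_{e_m}$). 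Moreover, the KTCA auditing step has clients exchange only the \emph{current} epoch's STR, not stale ones, so the two conflicting copies of $STR_e$ are never directly compared by any pair of neighbors; as written, your Subcase~B never actually produces a PoM.

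The paper bridges this differently, and you need both ingredients. First, the assumption that more than $N/2$ benign clients are online each epoch guarantees at least one benign client common to every pair of consecutive epochs, so the STR chain accepted by $G_{e'}$ is carried forward epoch by epoch to $G_{e_m}$. Second, the linear-history check (each STR commits to the hash of its predecessor, verified by every client for all missed epochs on reconnection) forces any divergence introduced at epoch $e$ to persist to epoch $e_m$ unless the server finds a hash collision. Consequently, in epoch $e_m$ the client $c_m$ holds an $STR_{e_m}$ that conflicts with the one held by the overlapping clients, and the ordinary current-epoch exchange along some edge of $G_{e_m}$ yields the PoM, which then floods the graph within the stated $2\cdot(diam(G_{e_m})+1)\cdot\delta$ bound (your timing concern is the easy part: $2\delta$ to receive, $diam\cdot\delta$ for the conflicting STRs to meet, $diam\cdot\delta$ for the PoM to flood). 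Without the overlap assumption and the hash-chaining argument, Subcase~B does not go through.
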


\begin{proof}
In KTCA, we allow clients to update their keys any time during the epoch, and the updated keys are included in the Merkle tree in the next epoch. So, the detection process following a key update starts at the beginning of the next epoch following the update.

The first case describes when \A tries to create a fake key for a client without equivocation. If client $i$ in $G_{e"}$ does not receive \acrshort{str}s with \acrshort{poi}s for epochs $e^*+1$ to $e"$ 
for its own key within $2\cdot\delta$ time or receives any invalid \acrshort{poi} for a corresponding \acrshort{str}s, then client $i$ considers it an attack. Client $i$ detects this corrupt behavior at or before $t_{e"} + 2\cdot\delta$.


The second case describes equivocation detection. From our assumption, each epoch has more than $N/2$ clients online, ensuring that there is an overlap of at least one client in $G_{e'}$ and $G_{e"}$. 
Note that $e'$ and $e"$ can be in any order, and it is beneficial for \A to always equivocate at the later epoch. Let $e_m = \max(e',e")$ and $c_m$ be either client $i$ or $j$ who is online in epoch $e_m$. To equivocate, in epoch $e_m$ \A gives a different \acrshort{str} to the overlapping clients than it gives to client $c_m$.


If client $c_m$ receives a different \acrshort{str} than the overlapping clients,  then one or more clients in the graph receive $STR_1$, and one or more clients in the graph receive $STR_2$. As the graph $G_{e_m}$ consists of $2\cdot\delta$-connections between clients, there must exist at least one edge connecting two clients where one client receives $STR_1$, and the other client receives $STR_2$. Once these two clients exchange the \acrshort{str}s, a \acrshort{pom} exists. Then the \acrshort{pom} is forwarded to all clients in $G_{e_m}$ and is later forwarded to other clients as they connect.

The maximum time that is incurred for detection by all clients is the case when the two farthest clients (the ones that are $diam(G_{e_m})$ apart) receive the conflicting \acrshort{str}s and all the intermediate clients have not received an \acrshort{str} from \A.
The detection time is computed as follows. At the beginning of an epoch, within $2\cdot\delta$ time, client $x$ receives $STR_1$, client $y$ ($diam(G_{e_m})$ apart from $x$) receives $STR_2$ and all the intermediate clients do not receive \acrshort{str}s. Clients $x$ and $y$ send their \acrshort{str}s to their neighbors and in turn they also forward the \acrshort{str}s to their neighbors. Within $diam(G_{e_m})\cdot\delta$ time the node which is $diam(G_{e_m})/2$ apart from both $x$ and $y$ receives both the \acrshort{str}s and generates the PoM.
Eventually, in $diam(G_{e_m})\cdot\delta$ time this PoM is propagated within the whole $G_{e_m}$ network and all the clients in $G_{e_m}$ become aware of the equivocation including $x$ and $y$. Thus, to detect equivocation the maximum time is $2\cdot\delta + diam(G_{e_m})\cdot \delta +  diam(G_{e_m})\cdot \delta$, which is $2\cdot(diam(G_e)+1 )\cdot\delta$. The equivocation is detected at or before $t_{e_m} + 2\cdot(diam(G_{e_m}+1))\cdot \delta$.


Third, from the KTCA design, if any client $j$ asks for client $i$'s key, client $j$ must receive the key of $i$ and \acrshort{poi} within $2\cdot\delta$ time. However, as described in the first paragraph of the proof, \A can say that it will include the updated key in the next epoch. So, if a client does not receive a \acrshort{poi} or it is invalid for epoch $e'$, client $j$ considers it an attack and detects it at or before $t_{e'}+ 2\cdot\delta$. 

If the other cases do not hold, then client $i$ receives a valid \acrshort{str} with \acrshort{poi} for its own key, and all clients receive the same \acrshort{str}s in epoch $e"$. PoI-Lemma in KT ensures that if a valid \acrshort{poi} is in the \acrshort{str} that is consistent with the \acrshort{str} $i$ has, it is consistent with the key that client $i$ published in the tree. 
\end{proof}

\begin{theorem} \label{thm:KT4}
No false Proof of Misbehavior is ever created. If the server is never corrupt, no benign client will ever falsely detect that the server is corrupt.
\end{theorem}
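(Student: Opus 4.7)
The plan is to split the statement into its two assertions and reduce each to (i) the standard unforgeability of the \acrshort{str} signature scheme and (ii) the $\delta$-connection and honest-server assumptions laid out in Section~\ref{key_transparency_defense}. Both are essentially ``sanity'' claims: the KTCA detection procedure raises an alarm only when the server's output is cryptographically malformed or inconsistent with its own past statements, and neither situation can occur when the server follows the protocol.

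For the first assertion, I would observe that a \acrshort{pom} is by definition two distinct \acrshort{str}s for the same epoch, each carrying a valid signature under the server's public key. An honest server constructs exactly one Merkle tree per epoch and signs only the single root it produces, so any two distinct, validly signed \acrshort{str}s for the same epoch would yield a forgery against the signature scheme. Under the standard EUF-CMA assumption, such a forgery occurs with only negligible probability, so no \acrshort{pom} is ever produced when the server is honest.

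For the second assertion, I would enumerate the rules in the client algorithm of Section~\ref{key_transparency_defense} under which a benign client flags the server as corrupt, and rule each out: (a) failing to receive a valid $\acrshort{str}_e$ and \acrshort{poi} for the client's own key within $2\cdot\delta$ cannot occur because the honest server sends the correct data over its $\delta$-connection; (b) an invalid signature or broken hash chain in an \acrshort{str} is precluded by honest \acrshort{str} construction; (c) an invalid \acrshort{poi} for a contact's key is ruled out by Lemma~\ref{lem:KTCA1} together with honest tree construction; and (d) receiving two conflicting \acrshort{str}s in the same epoch is ruled out by the first assertion. The short-lived-attack monitor invoked on each key update (item~4 of Section~\ref{key_transparency_defense}) must also be checked, but it only alarms when a key seen earlier fails to appear in a subsequent \acrshort{str}, which the honest server's consistent publication prevents.

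The main obstacle will be handling the epoch-boundary interaction between legitimate key updates and the ``receive \acrshort{poi} within $2\cdot\delta$'' rule: a key update performed by its owner within epoch $e$ is only reflected in $STR_{e+1}$, and I must argue that this interim state (a \acrshort{poi} that does not yet reflect the new key, accompanied by the server's explicit commitment to include it in the next epoch) is unambiguously distinguishable from the server actually withholding or forging a \acrshort{poi}. A careful reading of the protocol's message formats should suffice, but it is the one spot where the reduction is not entirely mechanical.
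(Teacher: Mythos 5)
Your proposal is correct and follows essentially the same route as the paper's (much terser) proof: a \acrshort{pom} would require two validly signed conflicting \acrshort{str}s, which reduces to a signature forgery, and every other detection trigger is precluded because a benign server always sends valid \acrshort{str}s and \acrshort{poi}s on time. Your case enumeration and the epoch-boundary caveat about key updates are more thorough than what the paper writes down, but they elaborate rather than depart from its argument.
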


\begin{proof}
By definition, a \acrshort{pom} occurs when a client receives two different \acrshort{str}s for an epoch. This equivocation cannot happen unless the server creates two different Merkle trees and signs two conflicting \acrshort{str}s. 
In a secure signature scheme,  only the signing key owner can generate a valid signature on an \acrshort{str}.
Also, a benign server always sends the \acrshort{str} and \acrshort{poi} on time.
\end{proof}

\section{Anonymous Key Monitoring (AKM)}
\label{Subsec:AKR}

The KTCA defense relies on clients to audit the server instead of trusting third-party auditors. However, it is vulnerable to graph partitioning attacks. We explore anonymous key monitoring to overcome this limitation.

Currently, secure messaging servers know the identity of the clients during public key distribution, which lends itself to being vulnerable to fake key attacks. 
This section explores AKM, a defense that leverages anonymous key requests from clients as they monitor keys to defend against fake key attacks.

Unlike KTCA, AKM successfully verifies keys even if \A partitions the graph $G$ and keeps it partitioned forever. Also, the graph's connectivity does not affect the detection time of key verification in AKM.

\subsection {Assumptions}

The attacker controls a server S and is also an eavesdropper on the communication entering and leaving the anonymous network. We assume an anonymous network with a (weak) anonymity property: when $n$ senders simultaneously send a short, fixed-length message to S through the anonymous network, the attacker can link the sender's identity $s_i$ to a received message $m_i$ with a probability at most $1/n + \epsilon(\kappa)$, where $\kappa$ is the security parameter, and $\epsilon$ is a negligible function of $\kappa$.

 We assume a maximum delay of $\delta$ when a client and the server communicate directly and a maximum delay of $\Delta>\delta$ when they communicate through the anonymous network. The duration of an epoch is much larger than $\Delta$.

Since AKM requires small, infrequent anonymous monitoring messages, it is possible to use strong-anonymous networks such as (1) mixnets (\eg  Nym{~\cite{diaz2021nym}}) and (2) Vuvuzela{~\cite{van2015vuvuzela}}.
The properties of AKM also make it feasible to rely on the weak anonymity guarantees from a system like Tor~\cite{torProject}. The weak anonymity guarantees suffice for AKM because:

\begin{itemize}[topsep=1pt,noitemsep,leftmargin=1.5em]
    \item Our messages are short, fixed-sized, and do not require high quality of service (such as low latency or a high-speed connection). Therefore, we can choose the proxies in a circuit uniformly from the list of available proxies (provided by the directory servers); this contrasts with the standard Tor client that chooses proxies in the circuit according to stability, latency, and available bandwidth.
    \item AKM has a client monitor their key over Tor once per epoch, a single request and response. The client constructs a new circuit each epoch to mitigate the risk of \A always controlling the entry and exit proxies and correlating the client's traffic.
    \item Finally, we introduce a randomized jitter by introducing a short randomized bounded delay before sending each request and randomizing the number of TOR proxies that participate in each circuit as described in ~\cite{gilad2013plug}.
\end{itemize}

\subsection {Design}
\label{sec:AKM_design}
The server distributes public keys to clients directly.
The server also supports an {\em \acrfull{akr}} for key monitoring.
An \acrshort{akr} is an unauthenticated public key request sent through an anonymous network to retrieve a key from the server. The server signs the public key in both responses (directly and through AKR).
An \acrshort{akr} prevents leakage of the client’s identity to the server by (1) not sending the client's identity or identifiable metadata in the request at the application layer, and (2) using a third-party anonymization service with a random delay for IP-layer anonymity of key requests.
AKR allows clients to retrieve their own keys or other clients' keys anonymously.

Adding support for an \acrshort{akr} should be a modest change to existing servers.
If a service does not support \acrshort{akr}s, two potential workarounds are for all clients to use the same generic identifier or to use a random identifier when making an \acrshort{akr}. We verified the feasibility of the first idea on a Signal server in the lab by retrieving keys anonymously using the same generic credential submitted by multiple clients in parallel TLS sessions.

To prevent a timing attack by \A, all clients make their AKR requests at the beginning of an epoch. Since \A controls the timing of each response, it could reply sequentially with sufficient delay to deanonymize the clients as a global passive adversary. However, since \A must commit to the response before returning it, the deanonymization occurs too late to help \A avoid detection.



To detect fake key attacks, all clients regularly perform two types of anonymous key monitoring plus short-lived attack monitoring.
\begin{enumerate}
    \item At the beginning of every epoch, Alice monitors her key using an \acrshort{akr} to ensure the server consistently distributes her key.  
    \item When Bob creates a new connection or receives a key update for a contact in epoch $i$, he monitors the contact's key using an \acrshort{akr} at the beginning of each epoch from epoch $i+1$ to $i+m$, where $m$ is the number of monitoring requests.
    \item Similar to KTCA, attacks where \A quickly restores a correct key can be prevented by performing \textit{short-lived attack monitoring}
    (\S\ref{subsec:KeyHist}). 
\end{enumerate}


At the beginning of an epoch, suppose Alice was to monitor her key and her contacts' keys in one bulk AKR. The server may be able to deanonymize her by comparing her request to a list of her new contacts and existing contacts that recently updated their key. To avoid this kind of deanonymization, Alice must create a fresh AKR that requests only one key for each key she monitors.

\subsection {Analysis}

Suppose \A performs a fake key attack on Alice by giving a fake key update to some of her contacts. 
To avoid detection, \A must reply to any \acrshort{akr} from those contacts with the same fake key. 
Simultaneously, \A must present the real key to Alice, who monitors her key during each epoch. 

For example, suppose Bob receives a key update for Alice containing a fake key. Bob then sends an \acrshort{akr} for Alice's key for the following $m$ epochs. To avoid detection, \A must return the same fake key for Alice to Bob during each epoch. Simultaneously, since Alice is sending an anonymous query for her key during each interval, \A must return Alice's correct key to her. The victim client, who is given a fake key in epoch $e$, detects the attack at or before $t_{e+m}$ with probability $1-(1/2)^m$, where $m$ is the number of epochs the victim client monitors the contact's key following a key update. Theorem 6.1 presents a detailed security analysis that considers clients going offline and \A giving a fake key update for a client to multiple contacts.

If a client $i$ does not receive a response for its \acrshort{akr} within $2\cdot\Delta$, where $\Delta$ is the maximum delay in the anonymous network between a client and the server, the client assumes the server is avoiding detection and considers it an attack.

Even though a client can detect an attack, AKM provides no \acrshort{pom}. AKM detects all four attacks presented in Fig.~\ref{fig:ThreatModel}.
If \A hands out fake keys and continues to answer all monitoring requests, the attack is detected 
because it is improbable for \A to deliver the correct key and the fake key in response to anonymous requests in a way that avoids detection. 
The attack is detected if \A fails to respond to an anonymous request or blocks any request.

In a client-targeted impersonation attack, \A can hand out a fake key for Bob and then answer all monitoring requests for Bob's key with the fake key. In this case, Bob is the only client that detects the attack. However, the impersonation attack can continue since there is no way to notify Bob's contacts that they are under attack automatically. Bob has to notify his contacts manually to make them aware of the attack because \A isolates Bob from communicating with anyone via the network.


Theorem 6.1 assumes that \A gives $i$'s fake key to (1) all new client connections and (2) some existing clients whom \A wants to attack in epoch $e$. This is a reasonable assumption since epochs are short (\eg an hour), and usually, there will be one or no new connections.
The assumption means that all monitoring requests not from $i$ expect the fake key, and only one request is for the real key.
Theorem A.4 (see Appendix~\ref{thm:AKM2}) assumes \A can partially target new connections in an epoch, making it more complicated for \A to avoid detection.

\label{section:AKM_detailed_analysis}
\begin{theorem} \label{thm:AKM}
Let $t_e$ denote the time that epoch $e$ begins. Assume $c$ clients receive a fake key for client $i$, during epoch $e$. 
Additionally, assume that the server cannot distinguish users with a probability significantly larger than 1/m (see Sect. 6.1 for a rationale). Then client $i$ detects that the server is corrupt at or before $t_{e+m}$ with probability $1-(1/(c+1))^m$, where $m$ is the number of epochs that the $c$ contacts monitor the client's key following a key update.
If either client $i$ or any of the $c$ clients disconnect during $m$ epochs, then the probability of detection is $1-\prod\limits_{i=1}^{m} max(1/(contacts_i+1), owner_i)$, where $contacts_i$ is the number of contacts online where ($contacts_i<c$), and $owner_i$ is a $1$ or $0$ depending on whether the owner is offline or online.
\end{theorem}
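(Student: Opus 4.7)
The plan is to analyze each epoch as an independent guessing game for the adversary and then compound over $m$ epochs. First I would set up the following observation: in any epoch $e' \in \{e+1, \dots, e+m\}$, the server receives AKRs from (at least) the $c$ contacts who hold the fake key plus the owner $i$, for a total of $c+1$ indistinguishable-looking requests. Because the server must commit to a response on each AKR before any later traffic could help correlate senders, and because the weak-anonymity assumption bounds its ability to link a request to a particular user by $1/(c+1) + \epsilon(\kappa)$, the server is effectively forced to choose which of the $c+1$ requesters gets the real key and which get the fake key before learning any disambiguating information.

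Next I would argue that to avoid detection in that epoch the server must (i) return the real key to $i$ and (ii) return the fake key to all $c$ contacts monitoring $i$'s key; any other assignment causes the owner to see a fake version of its own key (which triggers detection by the short-lived attack monitoring extension of AKM) or causes a monitoring contact to observe a change from the fake they already hold (also detection). Combining this with the anonymity bound, the probability of a successful ``real key to $i$, fake to all others'' guess is at most $1/(c+1)$ up to a negligible term in $\kappa$, which I would absorb into the statement as in Sec.~6.1's rationale.

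I would then invoke the design rule that clients construct fresh circuits each epoch and that queries are issued simultaneously at $t_{e'}$ with randomized jitter; this makes the server's per-epoch guess independent of guesses in other epochs. Hence the probability that the attack survives all $m$ monitoring epochs is bounded by the product $(1/(c+1))^m$, and the probability of detection by $i$ at or before $t_{e+m}$ is at least $1-(1/(c+1))^m$. I would bound the detection time by noting the last monitoring request in epoch $e+m$ completes within $2\Delta$, which is dominated by the duration of the epoch by assumption, and so detection occurs by $t_{e+m}$.

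For the extension to offline clients, I would argue epoch-by-epoch: in epoch $j$, let $contacts_j$ denote the number of the $c$ contacts that are online and issuing AKRs for $i$'s key, and let $owner_j = 0$ if $i$ is online (and hence issuing its self-monitoring AKR) and $owner_j = 1$ otherwise. If $owner_j = 1$, then $i$ issues no self-AKR that epoch, so the adversary can respond to every contact request with the fake key and $i$ cannot detect in epoch $j$; this contributes a factor of $1$. If $owner_j = 0$, the guessing argument above yields a per-epoch survival of $1/(contacts_j + 1)$. Taking the $\max$ of these two cases captures both, and multiplying across $j=1,\dots,m$ gives the stated probability $1 - \prod_{j=1}^{m} \max\bigl(1/(contacts_j+1),\, owner_j\bigr)$. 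The main obstacle I expect is the first step: justifying rigorously that the server gains no advantage beyond $1/(c+1)$ per epoch despite seeing later responses from $i$, which I would handle by emphasizing that each response is a commitment made before cross-epoch correlation can help, and that independence of circuits per epoch prevents amortizing information across rounds.
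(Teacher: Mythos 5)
Your proposal is correct and follows essentially the same route as the paper's proof: model each epoch as the server choosing which one of the $c+1$ indistinguishable anonymous requests receives the real key (success probability $1/(c+1)$ per epoch), multiply over $m$ independent epochs, and handle offline clients epoch-by-epoch via the $\max(1/(contacts_j+1), owner_j)$ factor. The only additions you make are explicit justifications of per-epoch independence (fresh circuits) and of the commit-before-deanonymization point, which the paper leaves implicit; these strengthen rather than change the argument.
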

\begin{proof} 

When \A presents a fake key for Alice to $c$ of its contacts in an epoch and continues the attack for at least $m$ epochs, 1) the $c$ contacts monitor Alice's key for $m$ epochs, and 2) Alice monitors its key. All of these requests are indistinguishable from each other. To avoid fake key detection, \A has to deliver the correct key to Alice in every epoch and the fake key to its $c$ contacts.
\A knows it will receive $c+1$ requests, and only 1 of them should return the real key, and $c$ should return the fake key. So \A has c+1 choices for plausible ways to distribute the keys, and only one of them is correct. During each epoch, the probability of making the right choice is $(1/(c+1))$. So as the $c$ clients monitor Alice's key for $m$ epochs, the probability of \A making the right choice to avoid detection is $(1/(c+1))^m$. 

If Alice disconnects for any of those $m$ monitoring epochs, \A can distribute fake keys without detection during those epochs. Suppose at every epoch $i$ during the monitoring interval we have $contacts_i$ contacts online where ($contacts_i<c$), and $owner_i$ is a $1$ or $0$ depending on whether the owner is offline or online. If the owner is offline and \A knows this, \A can hand out fake keys reliably with probability 1. The probability of avoiding detection is the product of the probability during each epoch.  So the detection probability is:  $1-\prod\limits_{i=1}^{m} max(1/(contacts_i+1), owner_i)$.

If the attack is short-lived (less than $m$ epochs), and \A restores the original key, the attack is detected immediately using the \textit{short-lived attack monitoring} basic defense (described in~\S\ref{short-lived_proof}).
\end{proof}

As the number of epochs increases, the probability of \A avoiding detection (false negative) is negligible. Once the probability is negligible, Alice's contacts stop monitoring Alice's recent key change to avoid unnecessary bandwidth.

\section{Key Transparency with Anonymous Client Auditors (KTACA)}
\label{sec:KTACA}

While AKM defends against graph partitioning attacks, it does not provide a \acrshort{pom} and takes longer to detect an attack than KTCA. Thus, we next explore combining key transparency and anonymity to achieve the best of both approaches. 


KTACA relies on a key transparency log maintained by the server, with clients auditing the server anonymously.
Clients audit for equivocation by requesting STRs anonymously from the server instead of exchanging them with their contacts as in KTCA. 
Since the server does not know which client is requesting an STR, it makes it difficult for \A to equivocate and avoid detection.



\subsection {Design}

The server supports an {\em \acrfull{asr}} for STR monitoring.
An \acrshort{asr} is an unauthenticated STR request sent through an anonymous network to retrieve the STR for an epoch from the server. Similar to \acrshort{akr}, \acrshort{asr} prevents leakage of the client's identity to the server.
The server maintains a Merkle tree containing the public key of all registered clients, as in KTCA (\S\ref{key_transparency_defense}). 

\begin{enumerate}

 \item At the beginning of every epoch, all clients retrieve an \acrshort{str} and \acrfull{poi} for their key from the server. If a client is offline for some epochs, it retrieves the STRs and corresponding \acrshort{poi} for missed epochs.
    Clients verify (a) the server is publishing a linear history of \acrshort{str}s by confirming the previous \acrshort{str}'s hash is in the current \acrshort{str}, 
    (b) the \acrshort{str}'s signature is valid, and (c) that their public key is in the tree using \acrshort{poi}.
    If a client does not receive an \acrshort{str} or \acrshort{poi} within $2\cdot\Delta$ time after the beginning of an epoch, or the \acrshort{str} is invalid, it is considered an attack, and the client disconnects from the server.
    
    \item At the beginning of every epoch, all clients use an \acrshort{asr} 
    to anonymously retrieve an \acrshort{str} from the server. This \acrshort{str} should match the \acrshort{str} retrieved directly from the server in step 1. Otherwise, the client has conflicting \acrshort{str}s and detects an attack with a \acrlong{pom}. 
    Since all clients in the system request the STR anonymously, to equivocate the server has to correctly identify one STR request from millions of anonymous STR requests. 
    \item For each new key lookup or key update, the client receives a \acrfull{poi} for verifying the key is in the tree.
        \item Similar to KTCA and AKM, attacks where \A quickly restores a correct key can be prevented by performing \textit{short-lived attack monitoring}
    (\S\ref{subsec:KeyHist}). 
\end{enumerate}

In KTCA, clients audit for equivocation by comparing \acrshort{str}s among all of their contacts. With anonymous client auditing, clients leverage anonymity to ensure the server consistently delivers a linear history of \acrshort{str}s. 
Retrieving STRs through the anonymous network removes (1) the dependency on the assumption of a $\delta$-connected network, (2) the computation and network bandwidth for clients to exchange STRs to monitor for equivocation, and (3) the need to trust all the other clients in the network to participate in the monitoring process. 

\subsection {Analysis}

KTACA defends against the attacks presented in Fig.~\ref{fig:ThreatModel}.
For each attack, if \A equivocates, the attack is detected within one epoch.  The owner of the key and their contacts that receive a fake key obtain a \acrshort{pom}. Detection occurs because it is improbable for \A to deliver different STRs to only the victim clients in response to anonymous STR requests when all the clients in the system retrieve STRs anonymously in every epoch. 
Unlike KTCA, there is no mechanism to distribute the \acrshort{pom} to all the other clients in $G_e$. 
However, the mechanism used in KTCA to forward the  \acrshort{pom} to other clients can be easily added to KTACA. Note that the \acrshort{pom} will only be shared in the connected graph $G_e$ of which victim clients are part.


In client-targeted attacks, \A could hand out fake keys and not equivocate. For a client-targeted MITM attack, all the victim clients detect the attack. For a client-targeted impersonation attack, only the victim client that owns the fake key detects the attack. However, the impersonation attack can continue since there is no way to automatically notify the victim's contacts that they are under attack. The victim has to notify their contacts manually because \A isolates the victim from communicating with anyone via the network. There is no \acrshort{pom} in either case.

If \A refuses to hand out an STR, a client detects the attack without any \acrshort{pom}.

\begin{theorem}\label{thm:KTACA}
Let $j$ be a benign client that is online at epochs $e$ and $e'$ where $e<e'$, and $i$ be a benign client that is online at epochs $e^*$ and $e"$ where $e^* \le e < e"$, and let $t_e$ denote the time that epoch $e$ begins. Assume that client $j$ requests $i$'s key from the server at time $t'$ within the epoch $e$. Then one of the following holds:

\begin{enumerate}
\item Client $i$ detects that the server is corrupt, at or before $t_{e"}+2\cdot\Delta$.
\item 
Client $j$ detects that the server is corrupt, with PoM, at or before  $t_{e'}+2\cdot\Delta$ with probability $ 1-(1/N)$, where N is the total number of clients.

\item Client $j$  detects that the server is corrupt at or before $t_{e'}+2\cdot\Delta$.

\item Client $j$ receives $i$'s correct public key during epoch $e$.
\end{enumerate}
\end{theorem}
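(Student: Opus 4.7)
The plan is to mirror the structure of the proof of Theorem~\ref{thm:KTCA3} for KTCA, but to replace its graph-connectivity argument with an anonymity-based probability argument that exploits the ASR. I would argue by contrapositive: assume cases 1, 2, and 3 all fail, and derive case 4, so that at least one of the four must always hold.

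First I would unpack each failure hypothesis. If case 1 fails, then in every epoch from $e^{*}+1$ through $e"$ client $i$ received, on time, a signed STR and a valid PoI tying its real key to that STR; by the PoI-Lemma (Lemma~\ref{lem:KTCA1}) the STR that $i$ holds for epoch $e$ commits to $i$'s real public key. If case 3 fails, then when $j$ is online at $e'$ it retrieves from the server a linear chain of STRs and a valid PoI for its own key covering the window $e,\ldots,e'$, and the PoI that $j$ obtained at time $t'$ for $i$'s key verifies against the STR that $j$ holds for epoch $e$. The only remaining question is whether the STR held by $j$ for epoch $e$ coincides with the STR held by $i$ for epoch $e$.

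Next I would use the ASR step of KTACA to close that gap. At the start of $e'$, $j$ issues an ASR in addition to its direct STR request, and raises a PoM if the two STRs disagree; to avoid case 2, the server must serve, to every benign client, an anonymous reply that matches that client's direct reply, while at the same time its direct replies for epoch $e$ disagree between $i$ and $j$. Because all $N$ benign clients submit indistinguishable ASRs in the same epoch, escaping detection amounts to correctly singling out $j$'s ASR among the $N$ requests, which by the anonymity assumption succeeds with probability at most $1/N$; this yields the $1-1/N$ bound in the statement. Putting the pieces together, conditional on cases 1, 2, and 3 failing, $i$ and $j$ hold the same STR for epoch $e$, that STR commits to $i$'s real key, and by the PoI-Lemma the key that $j$ accepted under its verified PoI at $t'$ must equal $i$'s real key, which is exactly case 4.

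The main obstacle I anticipate is making the anonymity step precise. I need to argue that the server must commit to each ASR response before it can learn which requester sent it, so that any after-the-fact global-passive deanonymization does not help the attacker meet the per-epoch $2\Delta$ deadline; I also need to make clear that the $1/N$ bound is per targeted client per epoch, which matches the theorem's single-epoch $t_{e'}+2\Delta$ detection time (in contrast to the multi-epoch amplification used for AKM in Theorem~\ref{thm:AKM}). Finally, I would note that the within-epoch flip-the-key evasion is out of scope of these four cases and is instead handled by short-lived attack monitoring (\S\ref{subsec:KeyHist}), which KTACA invokes separately.
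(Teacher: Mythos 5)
Your proposal is correct and follows essentially the same route as the paper's proof: the same four-way case split, the same use of the PoI-Lemma to tie the STR held by $i$ to $i$'s real key, and the same anonymity argument that the server can single out $j$'s ASR among the $N$ indistinguishable requests with probability at most $1/N$, yielding the $1-(1/N)$ PoM bound within $t_{e'}+2\cdot\Delta$. Your explicit contrapositive framing (assume 1--3 fail, derive 4) is only a presentational reorganization of the paper's concluding ``if the other cases do not hold'' step, and your closing remarks on commit-before-deanonymization and short-lived attack monitoring match the paper's own caveats.
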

\begin{proof}
In KTACA, we allow clients to update their keys anytime during an epoch and include their updated keys in the Merkle tree in the next epoch. So, the detection process following a key update starts at the beginning of the next epoch following the update. 

First, (a) if client $i$ does not receive a valid \acrshort{str} with a \acrshort{poi} for its key within $2\cdot\delta$ time, then the client considers it an attack, and the attack is detected within $2\cdot\delta$ time after epoch $e"$ begins. Also, (b) if client $i$ does not receive an \acrshort{str} from the server in response to its \acrshort{asr} within $2\cdot\Delta$ time, then client $i$ considers it an attack and is detected within $2\cdot\Delta$ time after epoch $e"$ begins.

When a client comes online, from our design, it retrieves all missing STRs and corresponding \acrshort{poi}s for their key. In this case $i$ retrieves all STRs and \acrshort{poi}s for epochs $e^*+1$ to $e"$ and verify \acrshort{poi}s in corresponding STRs. Also, $i$ verifies the linear history of STRs. If either of these verification fails or $i$ received conflicting STRs for epoch $e"$, while retrieving directly and anonymously, $i$ detects the attack. Since $\Delta \ge \delta$, client $i$ detects the attack at or before $t_{e"} + 2\cdot\Delta$.

Second, each client receives a valid \acrshort{str} with a \acrshort{poi} for its key directly from the server and also receives an \acrshort{str} anonymously through \acrshort{asr}. If the \acrshort{str}s do not match, the server has equivocated, and the attack is detected. 


Assume $j$ receives a fake key for $i$ from the server in epoch $e$ along with a \acrshort{str} and corresponding \acrshort{poi}. Thus, to avoid detection, the server has to return the same \acrshort{str} when $j$ requests an \acrshort{str} through an \acrshort{asr}. Client $j$ sends \acrshort{asr} in $e+1$, if it is online or when it comes online ($e'$). If there are $N$ registered clients, the probability of returning the fake \acrshort{str} to the $j$ in response to an \acrshort{asr} is $1/N$. 
If the \acrshort{str}s do not match, an attack is detected, and $j$ has two conflicting \acrshort{str}s for a \acrshort{pom} at or before $t_{e'} + 2\cdot\Delta$. The victim client, who receives a fake key, detects the equivocation with a probability $1-(1/N)$.

Third, from the KTACA design, if client $j$ requests the key for client $i$, client $j$ must receive the key and \acrshort{poi} within $2\cdot\Delta$ time. However, as described in the first paragraph of the proof, the server can say that it will include the updated key in the next epoch. So, if a client does not receive \acrshort{poi} or it is invalid for epoch $e'$, the client considers it an attack at detects it at or before $t_{e'}+ \Delta$. 

If the other cases do not hold, then client $i$ receives a valid \acrshort{str} with \acrshort{poi} for its own key, and all clients receive the same \acrshort{str}s. PoI-Lemma in KT ensures that if a valid \acrshort{poi} is in the \acrshort{str} that is consistent with the \acrshort{str} $i$ has, it is consistent with the key that client $i$ published in the tree. 
\end{proof}

\section{Short-lived attack monitoring}

\label{subsec:KeyHist}
All three defenses must handle the case where \A updates the key for client $j$ after beginning an epoch $e$ and restores the correct key before epoch $e+1$. Short-lived attack monitoring detects an adversary that launches a fake key attack and quickly restores the correct key to avoid detection. This attack can also be prevented in KTCA and KTACA by allowing at most one key change per client per epoch to ensure that any new key will be in $STR_{e+1}$. Allowing at most one key change per client per epoch is reasonable because clients rarely change their keys.


We assume that the secure messaging app does not re-use keys between app re-installs, and the server does not allow clients to re-use the same key pair. Current messaging apps like Signal and WhatsApp do not re-use keys. Also, this aligns with best practices to generate a new key pair instead of re-using previous key pairs. 

When a client requests a key for a client, the server response contains the key pair, the current epoch number, and the response's signature.
To detect a rapid fake-key attack, each client maintains a \textit{key update history} for its contacts, then checks for duplicates. For example, when Alice gets a key update request for Bob, if the new key exists in Bob's history, Alice detects a rapid fake-key attack.

\label{short-lived_proof}
\begin{lemma}\label{lem:short-lived}
If \A restores the correct key for a contact after giving a fake key previously, the attack is detected instantly with \acrshort{pom} by the victim who receives the fake key.
\end{lemma}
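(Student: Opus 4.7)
My approach is to reduce the detection event to a single history lookup by unpacking what ``restoration'' entails from the victim Alice's perspective. By hypothesis, Alice has received in some epoch $e$ a server-signed response installing a fake key $K_f$ for Bob, and for \A to return Alice's local view to the authentic key the server must later send her another signed response installing Bob's genuine $K_{real}$ at some epoch $e' > e$. By the design of \S\ref{subsec:KeyHist}, Alice appends every such signed response to a per-contact history for Bob and, upon receiving a key update, checks the new key against that history.

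Next I would argue that Alice already holds a prior signed response for $K_{real}$. The lemma's premise --- that a \emph{correct} key is being restored --- implicitly means Alice was previously communicating with Bob under $K_{real}$, so her history contains an earlier signed triple $(Bob, K_{real}, e_0)$ with $e_0 < e$. Consequently, the duplicate-key check in \S\ref{subsec:KeyHist} fires at the moment the restoration message is processed, yielding the ``instant'' detection claimed.

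The PoM itself is then the pair of signed server responses that both bind $(Bob, K_{real})$ but to two distinct epochs $e_0$ and $e'$. Under the no-key-reuse policy stated in \S\ref{subsec:KeyHist} --- the server will not re-issue an existing key pair for a client --- this pair of responses is impossible to produce without server misbehavior, and unforgeability of the server's signature scheme prevents Alice (or any third party) from having fabricated either half. This step simultaneously establishes detection and a transferable proof, which closes the argument.

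The part I expect to need the most care is precisely ruling out \A's ``trivial escape'' of never restoring: if \A simply leaves $K_f$ in place for Alice indefinitely, the lemma's hypothesis fails and detection falls instead to the epoch-level mechanisms of KTCA, AKM, or KTACA rather than to short-lived monitoring. A secondary subtlety worth pinning down is that the epoch number must be bound \emph{inside} the server's signature, since otherwise the two duplicate responses would not be distinguishable as a single provable inconsistency and the PoM would collapse.
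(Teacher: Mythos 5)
Your proposal is correct and follows essentially the same argument as the paper's proof: the restoration message re-presents a key already recorded in the per-contact key update history, the duplicate check fires immediately, and the two signed server responses binding the same key to different epochs constitute the \acrshort{pom}. Your added remarks on the no-key-reuse policy and on binding the epoch inside the signature are points the paper states as design assumptions rather than in the proof itself, but they do not change the route.
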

\begin{proof}
Assume Alice has Bob's public key $K_1$.
\A sends a fake key update message (regarding Bob) to Alice with fake key $K_f$ as Bob's new public key and impersonates Bob to Alice. 
Next, Alice verifies Bob's new key update in Bob's key update history but does not find an instance of $K_f$; therefore, Alice considers it a legitimate update and adds $K_f$ to Bob's key update history. 
Later, \A terminates the attack by sending a key update message restoring Bob's correct key $K_1$ to Alice.
Once again, Alice verifies Bob's new key update in Bob's key update history and finds an instance of $K_1$ already in the file. Therefore, Alice detects the rapid fake-key attack with a \acrshort{pom} (duplicate signed keys for two different epochs.)
\end{proof}



\begin{table*}
	\centering
	\caption{Performance analysis}
	\label{tab:performance}
	\begin{tabular}{l|c|c|c|c}
		                                     &  & \multicolumn{3}{c}{Resources}                                                                    \\
		\textit{Defense}                     & Detection time           & Client side memory & \multicolumn{2}{c}{Network Traffic (per client)}                        \\
		                                     &                            &                               & (per epoch)                 & (per new connection or key update) \\
		\midrule

    	KTCA  & < 2 epochs ($1$ epoch + $2\cdot(diam(G) $+$1)\cdot\delta$)                            & $104 B$                       & $7.136 KB$                                & $1.056 KB$          \\

		AKM      & <$m$ epochs ($\approx10^\ddagger$ epochs)                            & $0$                           & $~32 KB$                                 & $~320 KB$             \\

		KTACA                        &  <1 epoch +$\Delta$                      & $104 B$                     & $33.96 KB$                                 & $1.056 KB$            \\
		

	\end{tabular}

	\vspace{3mm}

	$\ddagger$~ = detects with $0.999$ probability\\

\end{table*}

\section{Proof-of-Concept and Performance}
\label{sec:proof_of_concept_performance}

We built a Signal infrastructure in the lab to demonstrate the feasibility of all of the fake key attacks from Fig.~\ref{fig:ThreatModel}. We also built proof-of-concept prototypes to demonstrate the effectiveness of the KTCA, AKM and KTACA defenses. Each of these defenses were able to detect all of the attacks. Details are described in Appendix~\ref{sec_attack_desc}.


We evaluate our design implementations with the following parameters based on the assumptions made by CONIKS \cite{melara2015coniks}:
\begin{itemize}
\item An IM application supporting $N = 2\textsuperscript{32}$ users. 
\item Epochs occur roughly once per hour.
\item Up to 1\% of the users change or add keys per day, \ie $n  \approx 2\textsuperscript{21}$ key updates per epoch.
\item A 128-bit cryptographic security level (SHA-256, 512 bit EC-Schnorr signatures).
\item Clients have an average of 100 connections.
\end{itemize}

Table~\ref{tab:performance} reports data related to the performance of each defense, including (1) the verification delay for defenses that detect attacks, (2) the client-side memory requirements, and (3) an estimate of the total network traffic generated per client for each defense.



\subsection{Client-side Memory Requirements}

KTCA and KTACA store the previous STR at the client, requiring 104 bytes (64 for the signature, 32 for the root, and 8 for a timestamp).
AKM stores nothing.



\subsection{Client-side Network Traffic}

\paragraph{Monitoring cost}

In KTCA, a client monitors its own key binding in the tree every epoch. It requires downloading an STR and \acrshort{poi} of their key. The STR contains the root of the tree and the signature (64 bytes). The \acrshort{poi} is of size $\log_2 (N)+1$ (\ie the depth of the Merkel tree). However as all the hashes in the \acrshort{poi} do not change every epoch and if $n$ key update happens every epoch ($n<N$), then the expected number of changes in the hashes in a given \acrshort{poi} path is $\log_2(n)$. 
So, a client downloads in total $64+\log_2(n)\cdot32 = 736$ bytes.

Clients exchange \acrshort{str}s with all their contacts every epoch. A client does not send an \acrshort{str} to a contact if the client received the same \acrshort{str} from that contact. Therefore a client in total sends and receives \acrshort{str}s from 100 contacts, which totals to $100\cdot64 = 6.4$ KB. Thus, the total network data used for monitoring is $6.4 + 0.736 = 7.136$ KB per epoch.

In AKM, a client retrieves keys from the server using the Tor circuit. There is no straightforward approach to find data usage for the Tor circuit theoretically. So, we implemented an Android app that 1) creates a Tor circuit and 2) retrieves a key bundle from our implementation of the Signal server. We used the packet capture Android app and analyzed a packet trace in Wireshark. One key retrieval using the Tor circuit uses $\approx32$ KB of data (15 KB sent + 17 KB received). For monitoring in AKM, a client retrieves its key every epoch anonymously, requiring $\approx 32$ KB of data.

In KTACA, clients monitor their key, which requires downloading $64+\log_2(n)\cdot32 = 736$ bytes every epoch. Furthermore, the client audits the STR by retrieving the STR anonymously, which requires $\approx32$ KB. The total size of the network data per epoch is 33.952 KB (1.216 KB + 736 bytes + 32 KB).


\paragraph{New contact lookup and verification cost}

In KTCA, when a client communicates with a new contact, it retrieves the \acrshort{poi} for that contact, which contains $\log_2(N)+1$ hashes. A new key lookup requires downloading $32 \cdot (\log_2(N)+1) = 1.056$ KB.


In AKM, a client does not download/upload anything for a new key lookup. However, the client needs to monitor a new key lookup for some epochs. In AKM detection mode, after a client starts a new communication, it monitors a new connection's key until it has confidence that the key she received is correct. Assuming that a client wants to detect an attack with a probability of 0.999, the contact's key monitoring is needed for ten epochs.
Thus, monitoring every new contact requires $\approx32$ KB for an epoch. If the client monitors a new contact for ten epochs, the total size of the network data for new contact monitoring is $320$ KB $(32~ KB \cdot 10)$. 

In KTACA, similar to KTCA, a client downloads a \acrshort{poi} for a new contact containing $\log_2(N)+1$ hashes. A new key lookup requires downloading $32 \cdot (\log_2(N)+1) = 1.056$ KB.

\paragraph{Performance cost example}
Assume an epoch's duration is one day, a client communicates with five new contacts every month, and one of their existing contacts updates their key each month. If the client uses KTCA, the network overhead is $220.416$ KB per month ($(30\cdot7.136) + (5\cdot1.056)+1.056$ KB), and the client side memory is $104$ B. If the client uses AKM, the network overhead is $2.88$ MB per month ($(30\cdot32) + (5\cdot320)+320$ KB). If the client uses KTACA, the network overhead is $1.025$ MB per month ($(30\cdot33.96) + (5\cdot1.056)+1.056$ KB). As mentioned in Table~\ref{tab:performance},  KTCA, AKM and KTACA detect the attacks within 2 , 10 and 1 day respectively, where $\Delta<<< 1$ day.

\section{Comparison and Discussion}
\label{subsec:compare}

Table~\ref{tab:comparison} compares how the three defenses detect the MITM and impersonation attacks in Fig.~\ref{fig:ThreatModel}.
For all pair-targeted attacks, (a) and (c), \A must equivocate for the attack to succeed. KTCA and KTACA detect the equivocation and generate a \acrfull{pom}.
For client-targeted attacks, (b) and (d), if \A equivocates, KTCA and KTACA detect the equivocation and generate a \acrfull{pom}.
If \A does not equivocate, then KTCA and KTACA detect the specific key that is fake but provide no \acrshort{pom}. 
AKM detects the specific key that is fake for all of the attacks but provides no \acrshort{pom}. 

For the client-targeted impersonation attack in AKM and KTCA/KTACA (non-equivocation), only the targeted client detects the attack. 
The client must manually alert its contacts of the attack.
Future work could extend the design to notify the other victim clients through out-of-band channels automatically. 

In KTCA, all clients receive a \acrshort{pom} when \A equivocates, but in KTACA, only the victim clients receive a \acrshort{pom}. Future work could explore forwarding the \acrshort{pom} over out-of-band channels to notify additional clients of the attack.

Since some attacks are detectable without any \acrshort{pom}, a dishonest user could falsely accuse a server of a fake key attack.
Likewise, an adversary can respond to an accusation by accusing the client of making a false report.
There is no way for a third party to determine which claim is correct.

{\bf Key monitoring:} (1) Clients monitor their keys during each epoch. In KTCA and KTACA, clients retrieve an \acrshort{str} and \acrshort{poi} from the server. In AKM, clients retrieve their key via the anonymous network. 
(2) Clients also monitor their contacts' keys for both new connections and key updates. In KTCA and KTACA, clients receive an \acrshort{str} and \acrshort{poi} from the server. In AKM, clients retrieve their contacts key for $m$ epochs via the anonymous network.
(3) In KTCA and KTACA, clients audit the server for equivocation during each epoch. In KTCA, the clients compare STRs with all their contacts to confirm that everyone receives the same \acrshort{str}. In KTACA, clients retrieve an \acrshort{str} anonymously and verify that the server maintains a linear history of \acrshort{str}s.

{\bf Server network load:} The defenses differ in the new demands placed on the server. Assuming $N$ total clients, the server load is as follows. For KTCA, the server handles $N$ self-monitoring requests each epoch. 
The server maintains a key transparency log and includes an \acrshort{str} and \acrshort{poi} in the response for each key lookup. 
For AKM, the server handles $N$ anonymous self-monitoring requests for each epoch. After each key update for a client's contact, the server handles monitoring requests for that contact's key for $m$ epochs. 
For KTACA, the server handles $N$ self-monitoring requests per epoch, similar to KTCA and $N$ anonymous \acrshort{str} monitoring requests per epoch.
The server maintains a key transparency log and includes an \acrshort{str} and \acrshort{poi} in the response for each key lookup.




{\bf Deployment:} Most of the defenses require changes to existing servers. For KTCA, the server needs to support a key transparency log and distribute \acrshort{str}s and \acrshort{poi}s to clients that make a key monitoring request for their key or a contact’s key. 
For AKM, the server needs to support anonymous key requests, or the clients need to use random identifiers. In addition, clients need to use an anonymous communication network when making requests.
With KTACA the server needs to create and maintain a key transparency log and support anonymous STR requests (\eg with Tor). Anonymous STR retrieval is only done once per epoch per client to retrieve 800 bytes (736 bytes for key monitoring + 64 bytes STR). KTACA can spread out the anonymous requests over the epoch duration to reduce the load on TOR if there are many clients in the system.


Deployment can be incremental. 
For KTCA, the adopting clients need to form a connected graph to receive protection. 
For AKM and KTACA, any two clients in contact with each other can opt-in to the defense to detect fake key attacks on their connection. 

\textbf{Group chat:} For group chat, messaging applications use two different methods: (1) Treat each group message as a direct message to the receivers (Signal app), or (2) When sending a message to a group for the first time, generate a {\em Sender Key} and distribute it to each group member's device using the pairwise encrypted sessions (WhatsApp).
In both cases, the group chat will be secure if a client can verify the identity public key of the groups' contacts and have a secure pairwise connection with each group member.

\textbf{Private information retrieval}
In AKM, clients create a distinct, single-key AKR for each key they monitor to prevent deanonymization through analyzing bulk requests (see final paragraph, Section~\ref{sec:AKM_design}).
An alternate design to prevent deanonymization while allowing bulk requests (multiple key requests in one AKR) is to use private information retrieval (PIR)~\cite{chor1995private}. PIR allows a user to retrieve an item from a server in possession of a database without revealing which item is retrieved. Creating an AKR using PIR allows clients to send bulk requests in one AKR without compromising anonymity. PIR hides which key(s) are requested, and bulk queries are not visible for the server to deanonymize the requester. We did not include PIR in our design because of its high overhead. However, PIR can be used in the future if it becomes more efficient than creating a new AKR for each key request.


\textbf{Managing the signing key across multiple servers for KTCA and KTACA}
For scalability, popular messaging applications use load balancing to distribute high-volume traffic across multiple servers. For KTCA and KTACA, the servers must maintain a consistent copy of the STR across all the servers to prevent a false positive for an equivocation attack. Since the system generates an STR only once per epoch, one approach is to dedicate one secure server to store the private key, sign the STR every epoch, and distribute the signed STR to all the servers.
DKIM requires mail servers to sign individual messages, which is much more effort than signing an STR once per epoch, so we feel this is feasible for current systems. Even though clients will detect a fake key attack if the private key is compromised, the potential harm to the provider's reputation motivates the provider to secure their private key.

In KTCA and KTACA, the server has to ensure authentication between the client and server to prevent MitM from flooding the clients with fake STRs to harm the server's reputation. Currently, messaging applications use certificate pinning to hardcode a list of keys for authorized servers.

\textbf{Notifications:} 
Even with automated detection,
a significant usability issue arises regarding what to do once an attack is detected. Some potential future work directions include: 

\begin{enumerate}[topsep=1pt,noitemsep,leftmargin=1.5em]
    \item Some defenses detect a fake key attack with a proof of misbehavior but do not indicate the specific victims of the attack. What could be done? Perhaps all/some users should be notified, but it is an open problem about how to notify effectively. Another option is not to notify individual users but relay the proof to security experts or privacy advocates. If an attack were ever detected, it would make front-page headlines in the press that a major service provider had equivocated. It could result in a loss of reputation. Users could rely on expert advice given outside the app on how to respond. 

    \item Some defenses detect an attack on specific users without generating a proof to convince others. How to notify the victim and what to recommend is another open problem. One might mimic what the current apps do when the manual key ceremony fails and compare this to other alternatives. Also, users could be encouraged not to share sensitive information with a contact. 

    \item Organizations using secure messaging (\eg political campaigns, news organizations) might prefer to alert security admins instead of users when an attack is detected.

    \item Provide UI indicators that a connection is secure after monitoring a contact for sufficient time instead of educating users about fake key attacks.

\end{enumerate}


\textbf{Recommendation:} We believe KTACA provides the best combination of features of all three defenses. It has the strongest security properties, provides a PoM where possible, and has a low detection delay. One hurdle to deployment is that KTACA requires an anonymous network, and some countries block Tor, but these countries are likely to also block secure messaging apps. KTCA does not rely on an anonymous network, but is vulnerable to a graph partitioning attack. Where deployment costs are a concern, AKM may be preferred because it only requires changing clients to support anonymous requests. However, AKM does not provide PoM and imposes significantly higher detection delay.




\section{Conclusion}


We designed three automated key verification defenses to detect fake key attacks in secure messaging applications. The defenses enable fake key attacks to be automatically detected, which relieves users from manually comparing key fingerprints to detect attacks. Since prior studies show that most users do not manually verify connections, these defenses can fill this void. In addition, the defenses may deter \A from launching fake key attacks.
However, vulnerable users can still perform manual key verification if they do not trust the automated system or want increased assurance.

\begin{acks}
We thank the reviewers and our shepherd, Lucjan Hanzlik, for their helpful feedback on the final version of the paper.
This work was partially 
supported by the \grantsponsor{123}{National Science Foundation}{https://www.nsf.gov} under Grant No. \grantnum{123}{CNS-1816929} and by the \grantsponsor{124}{Comcast Corporation}{https://corporate.comcast.com}. The opinions expressed in the paper are those of the researchers and not of their universities or funding sources.

\end{acks}

\bibliographystyle{ACM-Reference-Format}
{
\bibliography{signal}
}
\appendix
\section{Appendix}

\subsection{Implementation prototypes}

Our Signal infrastructure consisted of a server and five Signal clients (installed on Android mobile phones) for performing attacks and experimenting with defenses. The machine running the Signal server had Intel (R) Core (TM) i7-7700K CPUs @ 4.20 GHz, with 16 GB RAM. We used the open-source Signal server \texttt{OWS 1.88} and clients using \texttt{Signal-Android 4.23.4}.

\label{section:protoypes}
{\bf Fake Key Attacks }
\label{attacks_implement}
We modified the Signal server to perform fake key attacks (both MITM and impersonation). 
We added a client module to the server using the Signal client CLI\footnote{https://github.com/AsamK/signal-cli}. 
It performs the standard ratcheting encryption done in normal Signal clients. 
We modified the server to hand out fake keys to clients. 
When the clients send encrypted messages to each other through the server, the server redirects the messages of the victim(s) to the client CLI module where the message is decrypted for the MKS to access it, and then re-encrypted and sent to the victim in the case of a MITM attack.
Our experiment demonstrates that all fake key attacks described in Section~\ref{sec_attack_desc} are possible in the current Signal server implementation.

{\bf KTCA prototype:} We used the transparency log prototype from CONIKS~\cite{coniksjava} and built the STR verification protocol on top of it. In addition, we modified the Android Signal client to add support for \textit{STR Verification Messages (SVM)}, which clients use to exchange STRs with their contacts to verify their consistency during every epoch. SVMs are routed through the Signal server in the same way that current Signal messages are communicated, but with an additional flag set in the message body. The receiving clients check this flag on every incoming message before displaying it in the chat window. We ran a simulation to measure the overhead of checking this flag on one million messages. The cost is 0.622 milliseconds per message before displaying the message in the chat window,  which has a negligible impact on the user experience.


{\bf AKM prototype} We modified the Android Signal clients to support Anonymous Key Retrieval using the Tor Android Library.\footnote{https://github.com/jehy/Tor-Onion-Proxy-Library}
We use \texttt{TinyWebServer} on the client as a local server where the identity key of the user is available, then proxy the server's traffic via a Tor SOCKS proxy to create a hidden service. 
We used one computer as a Signal server for launching fake key attacks.
For AKR, our Signal clients use Tor to anonymously query identity keys. Our requests did not include the sender's unique identity, and Signal clients sent such key retrieval requests to our Signal server.

\begin{figure}
\centerline{\includegraphics[scale=0.3]{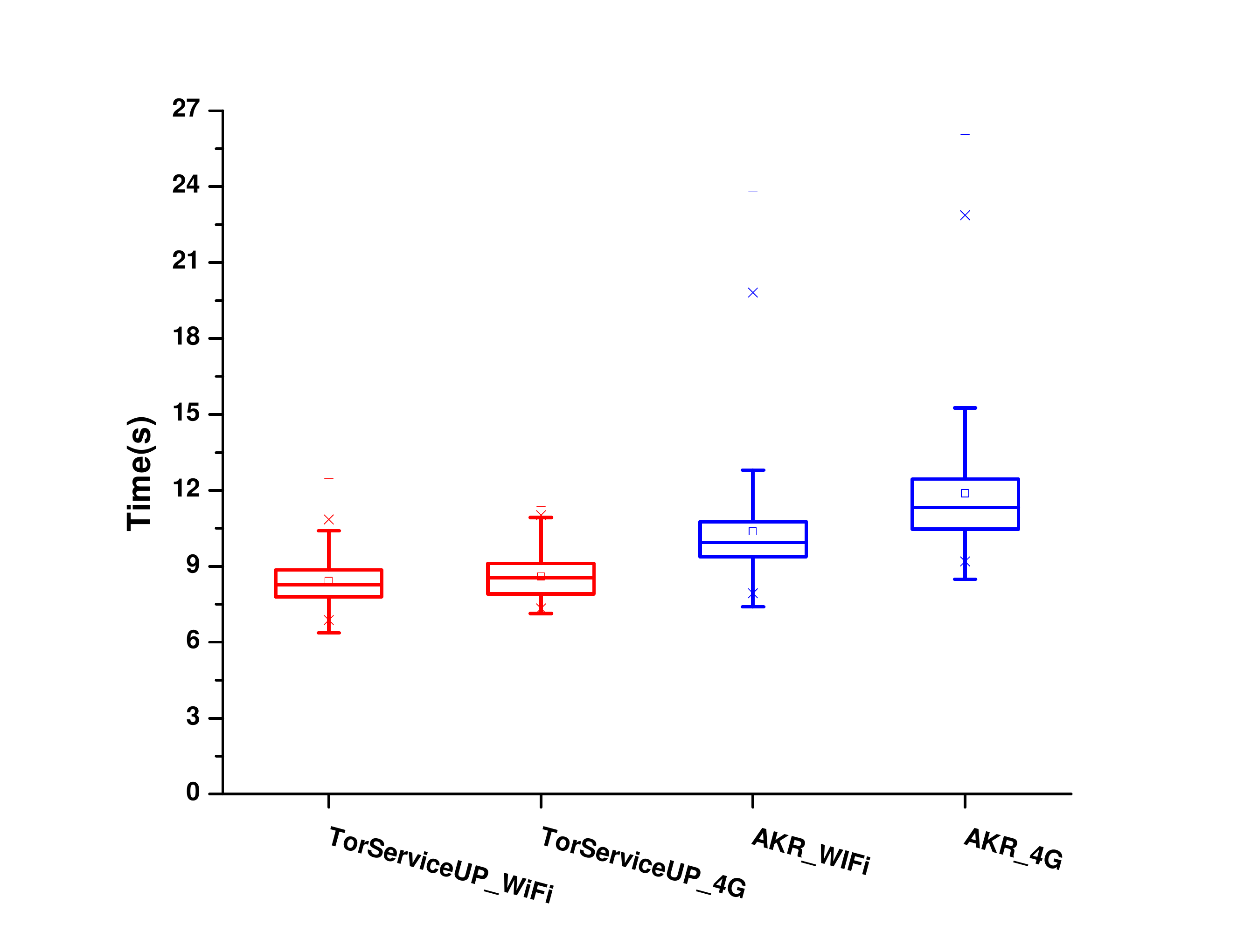}}
\caption{Time taken to a) retrieve keys through Tor circuit and b) launch the Tor service on a client. }
\label{fig:Tor_Service_Up}
\end{figure}

We also conducted an experiment measuring the time cost of accessing keys via Tor. We requested a key bundle through Tor $100$ times per day for ten days using both WiFi and 4G networks. The time includes setting up a Tor circuit and retrieving keys from the server using that circuit. 
Fig.~\ref{fig:Tor_Service_Up} shows the results of our experiment. 
Over WiFi, it took an average of $10.4$ seconds to retrieve keys from the server using Tor and an average of $11.88$ seconds over a 4G network.

{\bf KTACA prototype} We leveraged the key transparency logs and anonymity implementations from KTCA and AKM respectively. The Signal server maintains a transparency log, and the modified Android clients retrieve STRs anonymously from the server during every epoch.

{\bf Prevention mode prototype} We need an out-of-band channel. To accomplish this, a variety of third-party services exist that have differing deployability and security properties. Example third party services include SMS text messages, email, and Tor hidden service \emph{etc}. As we already use TOR in our defense and TOR services are more secure compared to other channels, we use them for our prototype. It is a TCP-based network service that is accessible only via the Tor network. We explain in detail how we use it for out-of-band verification of keys in Appendix \ref{app:Tor}.

\subsection{Prevention mode}

The three defenses discussed in the paper detect fake key attacks. However, in some scenarios, prevention is desirable (\ie clients detect the attack before they communicate). This subsection presents a design that extends any of the defenses to prevent fake key attacks.

\subsubsection{Assumptions}
We say that two clients have a $\delta$-delay channel if communication between them on this channel is reliable with maximal delay $\delta$. We assume there exists an out-of-band $\delta$-delay channel between clients and their contacts. Note that this out-of-band channel is only used for prevention mode and is authenticated by the protocol as explained in the design subsection.

We can automatically setup these out-of-band channels between two users and authenticate these using the technique described below. Various third-party services with differing deployability and reliability properties can be used to confirm a key update over an out-of-band channel. Some candidates for out-of-band connections are TCP/IP messages, SMS, email, or a Tor hidden service. A variety of systems (Signal, WhatsApp, Skype) already support direct (peer-to-peer) video calling, with features to handle Network Address Translation (NAT) traversal. Using smartphones presents a challenge with IPv4 because a phone's IP address can change due to mobility. With IPv6 in cellular networks, a phone's IP address is unique and constant. According to the Internet Society's 2018 State of IPv6 Deployment, 80\% of smartphones in the US on the major cellular network operators use IPv6~\cite{ipv6InternetSociety}.
   
\subsubsection{Design}

Prevention mode performs the basic defense process along with the following additional strategy.
\begin{enumerate}
    \item When a client connects with a new client, it waits until the key verification, described in each defense, completes before sending a message. The waiting period before a client can start communication varies in each defense based on their detection times. 
    \item After successfully verifying a new connection, clients use the existing secure channel to establish a shared key for sending authenticated messages to the new connection over an out-of-band channel. \textit{This channel will be used only for verifying key updates.} 
    \item When clients receive a key update for their contact, they use the authenticated out-of-band channel to confirm the key update from their contact directly. If a clients do not receive a response from their contact within $2\cdot\delta$ time, they consider it an attack.
    This out-of-band channel is authenticated by a MAC using keys established in Step 2. Therefore, it defends against data modification by on-path attackers for existing connections. 
\end{enumerate}

We note that the out-of-band connections are only needed to reduce the delay of the prevention mechanism. A client could use step 1) for both new connections and key updates, with the penalty being significant delay (one epoch for KTCA and KTACA, and $n$ epochs for AKM). Using steps 2) and 3) enables prevention with a much shorter delay of $2\cdot\delta$.

We also note that prevention mode can be turned on by individual users or for individual conversations. For example, a user who has a particular need could turn on "sensitive" mode, which triggers a UI that informs them that the conversation can have further protection if they call their contact or communicate out of band, asking them to likewise turn on sensitive mode. Once this is done, the clients can perform prevention mode through out-of-band connections and show this feature is enabled in the UI.

\subsubsection{Prototype} 

For an out-of-band channel in prevention mode, we used Tor services. If there is a key update, it retrieves the key from the corresponding client's Tor service. We simulated this key retrieval through TOR service, where a user connects to the Tor circuit and retrieves keys, 100 times per day for ten days. On an average it requires $\approx40$ KB of data. We also present the time it takes to launch the TOR service and retrieve a key through Tor in Fig.~\ref{fig:Tor_Service_Up}. 

The prevention mode in all defenses additionally stores the identity and symmetric key pair of the out-of-band channel for every contact. Assuming we use \textit{P2P} on IPv6 or a Tor service address, identity size is 16 bytes, and the symmetric key size is 32 bytes, which adds up to 48 bytes. From our assumption of 100 friends, the total client-side memory usage is $4.8$ KB.

\subsubsection{Analysis}

\textit{Attacks on a new connection} are prevented because clients do not communicate until they verify the keys using basic mode, which takes up to
\begin{itemize}
    \item $t_{e'} + 2\cdot(diam(G)+1)\cdot\delta$ time in KTCA, where $t_e$ be the time of the beginning of an epoch $e$, assuming a client connects with a new client during epoch $e$ and it comes online again in epoch $e'$ where $e<e'$. 
    \item $m$ epochs in AKM. Assuming an attack on only one contact, it takes an $\approx 10$ epochs to prevent the attack with $0.999$ probability (using $1-(1/(c+1))^m$ from Theorem~\ref{thm:AKM}).
    \item $2\cdot\Delta$ time in KTACA after the beginning of the next epoch to verify a new contact's key.
\end{itemize}

\textit{Attacks on an existing connection} are  prevented by immediately verifying a key update using the authenticated out-of-band connection, which takes up to $2\cdot\delta$ delay as described in the below Theorem.

\begin{theorem}\label{thm:prevention}
In prevention mode, a fake key update is detected, within $2\cdot\delta$ delay, before any further communication occurs.
\end{theorem}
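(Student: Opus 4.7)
The plan is to argue by case analysis on what the adversary \A can do after attempting to inject a fake key update for an existing contact, and show that in every case the receiving client either obtains a correct confirmation over the authenticated out-of-band channel within $2\cdot\delta$, or concludes an attack has occurred within that same bound. The setup to exploit is that after a successful verification of a new connection (step 2 of the design), the two endpoints share a symmetric MAC key established through the already-verified secure session; by the adversary model, \A does not control the client app, so \A never learns this MAC key.

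First I would fix notation: let Alice hold Bob's previously verified public key and a shared MAC key $k_{AB}$ with Bob over an out-of-band $\delta$-delay channel. Suppose at time $t_0$ Alice receives a key update message claiming a new public key for Bob. By step 3 of the design, Alice immediately issues a confirmation query for this new key to Bob over the authenticated out-of-band channel and starts a timer of length $2\cdot\delta$; crucially, she does not send any further application traffic to the alleged new key until this query resolves. I would then split into two cases. Case 1: the update is legitimate. Then Bob receives the query by $t_0+\delta$, produces a MAC over the new key under $k_{AB}$, and Alice receives the confirmation by $t_0+2\cdot\delta$, so she accepts, and no false positive is raised. Case 2: the update is a fake key injected by \A. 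Then either (2a) Bob receives the query and returns a MAC that does not match the fake key, so Alice's verification of the MAC fails and she flags the attack by $t_0+2\cdot\delta$; or (2b) \A blocks/drops the query or the response on the out-of-band channel, in which case Alice receives no valid MAC'd confirmation within $2\cdot\delta$ and, by the rule in step 3, declares an attack.

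The one remaining worry is that \A might attempt to forge a valid out-of-band confirmation itself. This is where I would invoke the unforgeability of the MAC together with the adversary model: $k_{AB}$ was exchanged through the already-verified end-to-end encrypted channel after key verification completed, \A does not compromise clients, and so \A has only negligible probability of producing a MAC on the fake key that verifies under $k_{AB}$. Combining this with Case 2 gives that any fake key update is caught by time $t_0+2\cdot\delta$, and by design no message is sent to the new key before that deadline, which is exactly the theorem statement.

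The main obstacle I anticipate is not the timing argument, which is immediate from the $\delta$-delay assumption, but justifying that the out-of-band channel is genuinely authenticated \emph{before} any fake-key opportunity arises. This reduces to observing that step 2 of the prevention-mode design bootstraps $k_{AB}$ only \emph{after} the basic-mode verification from KTCA/AKM/KTACA has succeeded for the initial connection, so the MAC key is agreed between the true endpoints; once that is in place, MAC unforgeability does the rest and the remainder of the proof is mechanical.
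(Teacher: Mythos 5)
Your proposal is correct and follows essentially the same argument as the paper's proof: Alice confirms the key update over the MAC-authenticated out-of-band $\delta$-delay channel, so \A can only read or drop the confirmation, and a missing or non-verifying response within $2\cdot\delta$ is declared an attack. Your version is simply more explicit about the case analysis and about why the MAC key is bootstrapped between the true endpoints, which the paper leaves implicit.
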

\begin{proof}
When Alice receives a key update message from Bob, she contacts him through the out-of-band channel to confirm the correctness of the key update. Bob verifies the key update to help Alice detect the attack. As the communication over the out-of-band channel is authenticated using a MAC, \A can only read or drop the communication. Therefore, if Alice does not receive a key update confirmation within $2\cdot\delta$ time, she considers it an attack.  
\end{proof}
\subsection{Special-purpose Monitoring}
\label{sec:heuristics}
This section describes two special-purpose monitoring checks that augment the three defenses presented earlier by helping them achieve their connectivity assumptions. These monitoring checks detect scenarios where assumptions are violated that the defenses were not designed to address.
The two scenarios covered by these checks are 
(1) an adversary distributes fake keys to a high percentage of a client's contacts, and
(2) an adversary blocks secure communication between a client and all of its contacts.

\subsubsection{Mass key update monitoring}
\label{subsec:MassKeyUpdate}

Mass key update monitoring detects when a significant number of contacts for a client update their key over a short duration of time. The purpose of this defense is to detect a client-targeted MITM attack on Alice's existing connections or the impersonation of all of Alice's existing contacts to Alice. To succeed, the adversary needs to send fake key updates to Alice for all her contacts. However, it is improbable that all of Alice's contacts would simultaneously re-install the application and generate such a burst of key updates.

If a {\it naive adversary} sends all the fake key updates at once, Alice can trivially detect the attack. A more {\it stealthy adversary} could spread out the key updates over time, possibly using a just-in-time approach to initiate a key update as each contact attempts to exchange a new message with Alice. Alice detects the attack 
by observing that more than a threshold number of her contacts change their keys during the last $t$ seconds.

\subsubsection{Isolation monitoring}
\label{subsec:Isolation}

Isolation monitoring detects client isolation from all of its contacts.
Suppose an adversary launches a client-targeted impersonation attack on Alice by impersonating her to her existing contacts. In that case, the adversary must isolate Alice by blocking all her outgoing messages and not forwarding incoming messages to her for these existing contacts.
If Alice cannot connect to any of her existing contacts during an epoch, a client-targeted impersonation attack may be in progress. It is also possible that she has lost overall Internet connectivity, so any detection approach must verify that Alice is isolated but still has Internet connectivity.

Alice regularly monitors for complete isolation to detect this attack by sending a connection verification message to each of her contacts during each epoch. 
If none of her contacts respond within an epoch, Alice is alerted. 
Note that Alice should only consider herself non-isolated if she receives a response from a contact whose key has not changed during the epoch. This precaution is necessary since an adversary can issue a fake key update message and then impersonate one of her contacts to issue a fake connection verification message. 

To make the isolation monitoring process efficient, the client divides an epoch into $x$ subintervals, where $x$ is the client's number of contacts. During each sub-interval, the client selects a random contact and sends them an isolation monitoring query. Upon receiving a response from a query sent during a previous sub-interval, the client considers itself as non-isolated, stops monitoring, and resumes monitoring at the next epoch. In case no response is received throughout all sub-intervals of an epoch, the client assumes it is isolated from its contacts by the server.

\subsection{Key Transparency Proofs}
\label{sec:kt_proofs_appendix}

\begin{theorem} \label{thm:KTCA1}
Let $i,j$ be benign clients which are nodes in the connected graph $G_e$ (of $\delta$ connected benign clients during epoch $e$), and let $t_e$ denote the time that epoch $e$ begins. Assume that at time $t'$ within the epoch $e$, client $j$ asks the server for the key of $i$. Then one of the following holds:
\begin{enumerate}
\item Client $i$ detects that the server is corrupt, at or before $t_{e}+2\cdot\delta$. 
\item All clients in $G_{e}$ detect that the server is corrupt, with PoM, at or before $t_{e}+2\cdot(diam(G_e)+1)\cdot \delta$. 
\item Client $j$ either detects that the server is corrupt at or before $t'+2\cdot\delta$ or receives the correct public key used by $i$ during epoch $e$. 
\end{enumerate}
\end{theorem}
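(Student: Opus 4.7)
The plan is to prove this by case analysis on how the adversary can behave with respect to client $i$, client $j$, and the rest of $G_e$, mirroring the structure of Theorem~5.2 but simplified by the stronger assumption that all clients are online throughout epoch $e$ and that key updates happen at a fixed time. I would enumerate three mutually exhaustive scenarios: (a) the server fails to give $i$ a valid $STR_e$ together with a correct \acrshort{poi} for $i$'s own key, (b) the server equivocates by issuing at least two distinct signed \acrshort{str}s to clients in $G_e$, or (c) the server neither mistreats $i$ nor equivocates.

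First I would dispatch (a): by the KTCA protocol, $i$ requests its own \acrshort{str} and \acrshort{poi} at time $t_e$, and treats any missing, late, or invalid response as evidence of corruption. Since the $i$-to-server round trip is bounded by $2\cdot\delta$, detection happens at or before $t_e+2\cdot\delta$, yielding case 1 of the theorem. Next I would handle (b). Since $G_e$ is connected and clients relay every received \acrshort{str} to their contacts, if two distinct \acrshort{str}s exist in $G_e$ then some edge must connect two benign clients holding different \acrshort{str}s; as soon as they exchange, a \acrshort{pom} is formed. The timing argument I would then carry out is: within $2\cdot\delta$ after $t_e$ every benign client either has an \acrshort{str} from the server or will relay one from a neighbor; the two distinct \acrshort{str}s, originating at worst at endpoints of a diameter-length path, meet within an additional $diam(G_e)\cdot\delta$ at some intermediate node, which then floods the resulting \acrshort{pom} back out over at most another $diam(G_e)\cdot\delta$, giving the bound $t_e+2\cdot(diam(G_e)+1)\cdot\delta$ claimed in case 2.

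Finally, in (c), every client in $G_e$ (including $i$ and $j$) holds the same valid $STR_e$. When $j$'s key lookup for $i$ is issued at time $t'$, the server owes $j$ a key together with a \acrshort{poi} into this common $STR_e$ within $2\cdot\delta$; if none arrives or the \acrshort{poi} fails to verify, $j$ locally detects corruption by time $t'+2\cdot\delta$. Otherwise $j$ obtains a valid \acrshort{poi} for some public key mapped to $i$. Here I would invoke PoI-Lemma (Lemma~\ref{lem:KTCA1}): the same \acrshort{str} cannot accommodate two different keys for the same client identifier without yielding a hash collision, so the key $j$ receives must coincide with the one $i$ registered, establishing case 3.

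The main obstacle is the timing bookkeeping in case (b): one has to argue carefully that the $2\cdot\delta$ startup phase, the diameter-bound meeting of the conflicting \acrshort{str}s, and the diameter-bound back-propagation of the \acrshort{pom} compose additively, and that no benign client is left out because of edge cases (e.g.\ a client that received nothing from the server but learned an \acrshort{str} via a neighbor still contributes a valid comparison). The other cases reduce immediately once the PoI-Lemma and the self-monitoring deadline are in hand.
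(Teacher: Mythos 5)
Your proposal is correct and follows essentially the same route as the paper's proof: the same three-way case split (self-monitoring failure, equivocation, and the honest/lookup case), the same $2\cdot\delta + diam(G_e)\cdot\delta + diam(G_e)\cdot\delta$ timing decomposition for PoM propagation, and the same final appeal to the PoI-Lemma to rule out a second key for $i$ under a common \acrshort{str}. No gaps worth noting.
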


\begin{proof}
First, if client $i$ in $G_e$ does not receive an \acrshort{str} with \acrshort{poi} for its own key within $2\cdot\delta$ time or receives an invalid \acrshort{poi} or \acrshort{str}, then the client considers it an attack. Client $i$ detects this corrupt behavior at or before $t_{e} + 2\cdot\delta$.

The second case describes the detection of equivocation.  If clients $i,j$ receive different \acrshort{str}s, then one or more clients in the graph receive $STR_1$, and one or more clients in the graph receive $STR_2$. As the graph $G_e$ consists of $\delta$-connections between clients, there must exist at least one edge connecting two clients where one client receives $STR_1$, and the other client receives $STR_2$. Once these two clients exchange the \acrshort{str}s, a \acrshort{pom} exists. Then the \acrshort{pom} is forwarded to all clients in $G_e$. 

The maximum time that is incurred for detection by all clients is the case when two farthest clients (the ones which are $diam(G_e)$ apart) receive the conflicting \acrshort{str}s and all the intermediate clients have not received \acrshort{str}s from the \acrshort{ms}.
The detection time is computed as follows. At the beginning of an epoch, within $2\cdot\delta$ time, client $i$ receives $STR_1$, client $j$ ($diam(G_e)$ apart from $i$) receives $STR_2$ and all the intermediate clients do not receive \acrshort{str}s. Clients $i$ and $j$ send their \acrshort{str}s to their neighbors and in turn they also forward the \acrshort{str}s to their neighbors. Within $diam(G_e)\cdot\delta$ time the node which is $diam(G_e)/2$ apart from both $i$ and $j$ receives both the \acrshort{str}s and generates the PoM. Eventually, in $diam(G_e)\cdot\delta$ time this PoM is propagated within the whole $G_e$ network and all the clients in $G_e$ become aware of the equivocation including $i$ and $j$. Thus, to detect equivocation the maximum time is $2\cdot\delta + diam(G)\cdot \delta +  diam(G_e)\cdot \delta$, which is $2\cdot(diam(G_e)+1)\cdot\delta$.
Third, from the KT design, if any client $j$ asks for the client $i$ key, client $j$ must receive the key of $i$ and \acrshort{poi} within $2\cdot\delta$ time. If a client does not receive \acrshort{poi} or it is invalid, the client considers it an attack and detects it at or before $t'+ 2\cdot\delta$.

Otherwise, client $j$ receives a valid \acrshort{poi} and \acrshort{str}. If Case 1 and 2 do not hold, then client $i$ receives a valid \acrshort{str} with \acrshort{poi} for its own key, and all clients receive the same \acrshort{str}. PoI-Lemma in KT ensures that if a valid \acrshort{poi} is in the \acrshort{str} that is consistent with the \acrshort{str} $i$ has, it is consistent with the key that client $i$ published in the tree. \end{proof}

\begin{theorem}\label{thm:KTCA2}
Let $j$ be a benign client that is a node in the connected graph $G_e$, and $i$ be a benign client in the connected
graph $G_{e"}$ and $G_{e^*}$ where $e^*<e<e"$, and let $t_e$ denote the time that epoch $e$ begins. Assume that client $j$ requests $i$'s key, which is included in $STR_e$, from the server at time $t'$ within epoch $e$. Then one of the following holds:
    \begin{enumerate}
    \item Client $i$ detects that the server is corrupt, at or before $t_{e"}+2\cdot\delta$.
    \item All clients in $G_{e"}$ detect that the server is corrupt, with \acrshort{pom}, at or before $t_{e"}+2\cdot(diam(G_{e"}+1))\cdot \delta$.
    \item Client $j$ either detects that the server is corrupt at or before $t'+2\cdot\delta$ or 
    receives the correct public key used by $i$ during epoch $e$.
    \end{enumerate}
\end{theorem}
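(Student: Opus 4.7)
My plan is to lift the proof of Theorem~\ref{thm:KTCA1} to the present setting by replacing the self-audit that $i$ performs \emph{during} epoch $e$ with the catch-up audit that $i$ performs on reconnecting in epoch $e"$. The KTCA design rule that a rejoining client must fetch \acrshort{str}s and \acrshort{poi}s for every missed epoch $e^*+1,\ldots,e"$, and verify the linear hash chain between them, is what keeps the PoI-Lemma usable across the offline gap. I would then re-derive the three disjunctive cases by walking essentially the same decision tree as in Theorem~\ref{thm:KTCA1}.

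For Case 1, I would argue that on returning at $t_{e"}$, client $i$ must receive within $2\cdot\delta$ valid \acrshort{str}s with valid \acrshort{poi}s for its own key in every missed epoch together with a consistent chain of previous-\acrshort{str} hashes; a missing message, a broken chain, or a \acrshort{poi} for a key that is not $i$'s real key is detected at or before $t_{e"}+2\cdot\delta$. For Case 3, the argument at time $t'$ is unchanged from Theorem~\ref{thm:KTCA1}: client $j$ must obtain a valid \acrshort{poi} under some $STR_e$ within $2\cdot\delta$ or declare an attack; and by the PoI-Lemma, such a valid \acrshort{poi} pins down at most one key for $i$ within that $STR_e$.

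For Case 2, I would combine the two preceding observations. If neither Case~1 nor Case~3 fires, then $i$ has accepted some $STR_e^{(i)}$ via its back-fetched chain and $j$ has accepted some $STR_e^{(j)}$ at time $t'$. If these coincide, the PoI-Lemma forces $j$'s received key to equal $i$'s genuine key, placing us in the ``otherwise'' branch; if they differ, the linear hash chain forces the divergence to persist all the way up to $STR_{e"}$, so two conflicting signed $STR_{e"}$ are live in the system. Clients in $G_{e"}$ exchange their $STR_{e"}$ at the start of epoch $e"$, so by the edge-and-diameter relay argument already used in Theorem~\ref{thm:KTCA1}, the resulting \acrshort{pom} reaches every node of $G_{e"}$ within $2\cdot(diam(G_{e"})+1)\cdot\delta$ after $t_{e"}$.

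The main obstacle will be making that last step airtight: I have to show that both versions of the diverged chain actually materialise inside $G_{e"}$, rather than one of them having been quietly abandoned by the server after $j$ observed it. The cleanest route I would try is to note that any client who held the ``other'' chain at epoch $e$ and is still online at $e"$ must be served an $STR_{e"}$ whose previous-hash pointers are consistent with the chain it already stores, or else that client's own linear-history check fires (reducing back to a Case~1 instance against it). Hence both chains survive into $G_{e"}$, supplying the two signatures that constitute the \acrshort{pom}, and the diameter-style propagation bound can be lifted verbatim from Theorem~\ref{thm:KTCA1}.
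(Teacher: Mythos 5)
Your overall route is the same as the paper's: Case~1 becomes the catch-up audit of the back-fetched \acrshort{str}s and \acrshort{poi}s for epochs $e^*+1,\ldots,e"$, Case~3 is unchanged from Theorem~\ref{thm:KTCA1}, and Case~2 reduces equivocation at epoch $e$ to conflicting $\acrshort{str}_{e"}$s that the relay argument propagates through $G_{e"}$. Your elaboration of Case~2 (the linear hash chain forcing a divergence at $STR_e$ to persist into $STR_{e"}$, because any client holding the other chain would otherwise fail its own linear-history check) is in fact more explicit than the paper's one-sentence version, and it is the right mechanism.

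There is one missing ingredient, and it is exactly where you flagged ``the main obstacle.'' Your argument that both chains survive into $G_{e"}$ is conditional on there existing a benign client that held the $j$-side chain in epoch $e$ \emph{and} is online in epoch $e"$; you never establish that such a client exists. The theorem does not promise that $j$ itself is in $G_{e"}$, so without such a witness the server can simply serve every member of $G_{e"}$ the $i$-side chain, no conflicting $\acrshort{str}_{e"}$ ever enters $G_{e"}$, and the Case~2 time bound fails (detection would be deferred until some $G_e$ client reconnects). The paper closes this by invoking the standing assumption that more than $N/2$ benign clients are online every epoch, which by pigeonhole guarantees at least one client common to $G_e$ and $G_{e"}$; that overlapping client either already triggered a \acrshort{pom} inside $G_e$ (which is earlier than the stated bound, so harmless) or carries the $j$-side chain into epoch $e"$, where its linear-history check pins the server to a conflicting $\acrshort{str}_{e"}$. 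Adding that one appeal to the overlap assumption completes your proof.
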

\begin{proof}
This proof builds upon the arguments in the previous theorem's proof. 

First, if client $i$ in $G_{e"}$ does not receive \acrshort{str}s with \acrshort{poi}s for epochs $e^*+1$ to $e"$ for its own key within $2\cdot\delta$ time or receives any invalid \acrshort{poi} for a corresponding \acrshort{str}, then client $i$ considers it an attack. Client $i$ detects this corrupt behavior at or before $t_{e"} + 2\cdot\delta$

The second case describes the detection of equivocation. From our assumption, each epoch has more than $N/2$ clients. It ensures that there is an overlap of at least one client in $G_e$ and $G_{e"}$. To equivocate, the server has to give a different \acrshort{str} in epoch $e"$ to the overlapping clients than it gives to client $i$. As described in Theorem~\ref{thm:KTCA1}'s proof, if clients receive different \acrshort{str}s in $G_{e"}$, it is detected with \acrshort{pom} by all clients in $G_{e"}$ at or before $t_{e"}+2\cdot(diam(G_{e"}+1))\cdot \delta$.

Third, from the KT design, if any client $j$ asks for client $i$'s key, client $j$ must receive the key of $i$ and \acrshort{poi} within $2\cdot\delta$ time. If a client $j$ does not receive \acrshort{poi} or it is invalid, the client considers it an attack and detects it at or before $t'+ 2\cdot\delta$.

Otherwise, client $j$ receives a valid \acrshort{poi} and \acrshort{str}. If Case 1 and 2 do not hold, then client $i$ receives a valid \acrshort{str} with \acrshort{poi} for its own key, and all clients receive the same \acrshort{str} in epoch $e"$. PoI-Lemma in KT ensures that if a valid \acrshort{poi} is in the \acrshort{str} that is consistent with the \acrshort{str} $i$ has, it is consistent with the key that client $i$ published in the tree.
\end{proof}

\subsection{AKM proofs}
\begin{theorem} 
\label{thm:AKM2}
Let $t_e$ denote the time that epoch $e$ begins. Assume $f$ clients receive a fake key and $r$ clients receive a real key for client $i$, during epoch $e$. Then following holds: 
\begin{enumerate}
    \item client $i$ detects that the server is corrupt at or before $t_{e+m}$ with probability $1-((r+1)/(r+f+1))^m$, where $m$ is the number of epochs that the contacts monitor the client's key following a key update. If clients disconnect during $m$ epochs, then the probability of detection is $1-\prod\limits_{i=1}^{m} max((real_i+1)/(real_i+fake_i+1), owner_i)$, where $real_i$ and $fake_i$ are the number of contacts online who received real and fake key in epoch e ($real_i<r$, $fake_i<f$), and $owner_i$ is a $1$ or $0$ depending on whether the owner is offline or online.
    \item at least one client detects that the server is corrupt at or before $t_{e+m}$ with probability $1-(1/{(n+1) \choose f})^m$, where $m$ is the number of epochs that the contacts monitor the client's key following a key update. If clients disconnect during $m$ epochs, then the probability of detection is $1-\prod\limits_{i=1}^{m} 1/{n' \choose fake_i}$, where $fake_i$ are the number of contacts online who received a fake key in epoch e ($fake_i<f$), $n'$ is the number of total online clients monitoring $i$'s key including $i$, and $owner_i$ is a $1$ or $0$ depending on whether the owner is offline or online.
\end{enumerate}

\end{theorem}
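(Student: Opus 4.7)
The plan is to generalize the counting argument from Theorem~\ref{thm:AKM} (the $r=0$ case) by tracking how many indistinguishable \acrshort{akr}s \A sees per epoch and how many of them must be answered with the real vs.\ fake key to avoid being caught. In every monitored epoch \A receives a batch of $r+f+1$ anonymous queries: one from client $i$ (who must get the real key to stay oblivious), $r$ from contacts holding the real key (who must keep getting the real key), and $f$ from contacts holding the fake key (who must keep getting the fake key). By the weak-anonymity assumption of Section~6.1, these requests are statistically indistinguishable at the server, so \A must commit to a labeling of each received \acrshort{akr} as a \emph{real-response} or \emph{fake-response} before it can learn who sent which request.

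For part~1, I would observe that client $i$ fails to detect in a given epoch if and only if \A's labeling assigns $i$'s query to the real-response bucket. Since $r+1$ out of $r+f+1$ requests must be answered with the real key (else one of the $r$ real-key contacts would detect), the probability that $i$'s specific query falls into the real bucket is $(r+1)/(r+f+1)$, and independence across epochs yields $((r+1)/(r+f+1))^m$; the complement gives the stated formula. For disconnections, I would note that if $i$ is offline in epoch $i$ then \A trivially avoids $i$-detection that epoch (the $\mathit{owner}_i=1$ case), and otherwise the relevant ratio uses only the online counts $\mathit{real}_i$ and $\mathit{fake}_i$, producing the stated product-of-maxima.

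For part~2, I would argue that \A avoids detection by \emph{every} monitoring client iff its entire labeling of the $n+1=r+f+1$ anonymous queries matches the ground truth, i.e., it chooses exactly the $f$ requests coming from fake-key holders as the ones to answer with the fake key. There are $\binom{n+1}{f}$ such labelings and only one is correct, so the per-epoch success probability is $1/\binom{n+1}{f}$ and the $m$-epoch probability is $(1/\binom{n+1}{f})^m$. The disconnection variant then just substitutes the online count $n'$ for $n+1$ and $\mathit{fake}_i$ for $f$ epoch by epoch, taking a product.

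The main obstacle will be justifying that a uniformly random valid labeling is in fact \A's optimal strategy, rather than some adaptive or asymmetric rule: this has to be read off from the indistinguishability guarantee (up to the negligible $\epsilon(\kappa)$ term), together with the fact that \A must reply to each \acrshort{akr} before the next epoch begins, so it cannot defer commitment. A second subtlety is the disconnection case in part~2, where I must be careful to define $n'$ as the number of \emph{online} monitors including $i$ and to allow the owner-offline epochs to contribute a factor of $1$ (no constraint from $i$), mirroring the $\mathit{owner}_i$ term in part~1; otherwise the product formula would understate \A's success probability.
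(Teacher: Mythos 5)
Your proposal is correct and follows essentially the same counting argument as the paper's proof: \A sees $r+f+1$ indistinguishable requests per epoch, succeeds against client $i$ with probability $(r+1)/(r+f+1)$, and succeeds against all monitors only by picking the one correct labeling out of $\binom{r+f+1}{f}$, with the same per-epoch product handling of disconnections. Your added concern about justifying that a uniformly random valid labeling is \A's optimal strategy is a fair point the paper leaves implicit, but it does not change the route.
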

\begin{proof} 

When \A presents a fake key for Alice to $f$ of its contacts in an epoch and continues the attack for at least $m$ epochs, 1) the $f+r$ contacts monitor Alice's key for $m$ epochs, and 2) Alice monitors its key. All of these requests are indistinguishable from each other. To avoid fake key detection, \A has to deliver the correct key to Alice and $r$ contacts in every epoch and the fake key to its $f$ contacts.

\A knows it will receive $r+f+1$ requests, and only $r+1$ of them should return the real key, and $f$ should return the fake key.
In the first case, client $i$ detects the attack when the server gives it a fake key. The probability that server can hand out a real key to client $i$ in an epoch is $(r+1)/(r+f+1)$ and the probability of providing real key to Alice for $m$ epochs is $((r+1)/(r+f+1))^m$ 

In the second case, \A has ${(r+f+1) \choose f}$ choices for plausible ways to distribute the keys, and only one of them is correct. During each epoch, the probability of making the right choice is $(1/{(r+f+1) \choose f})$. So as the $f+r+1$ clients monitor Alice's key for $m$ epochs, the probability of \A making the right choice to avoid detection is $(1/{(r+f+1) \choose f}))^m$. When subtracted from 1, this is the probability where at least one client detects the attack within $m$ epochs, validating our second case.

If Alice disconnects for any of those $m$ monitoring epochs, \A can distribute fake keys with no detection by $i$ during those epochs. Suppose at every epoch $i$ during the monitoring interval we have $real_i$ and $fake_i$ contacts online where ($real_i<r$, $fake_i<f$), and $owner_i$ is a $1$ or $0$ depending on whether the owner is offline or online. If the owner is offline and \A knows this, \A can hand out fake keys without worrying about detection from $i$ (Case 1). The probability of avoiding detection from $i$ is the product of the probability during each epoch. So the detection probability is:  $1-\prod\limits_{i=1}^{m} max((real_i+1)/(real_i+fake_i+1), owner_i)$. To avoid detection from any client, which is second case, the server has to handout real and fake keys to corresponding online clients during every epoch, and its probability is $\prod\limits_{i=1}^{m} 1/{n' \choose fake_i}$

If the attack is short-lived (less than $m$ epochs), and \A restores the original key, the attack is detected using the \textit{short-lived attack monitoring} basic defense as soon as \A restores the original key (described in~\S\ref{short-lived_proof}).
\end{proof}

\subsection{Tor Hidden Service for out-of-band key verification}
\label{app:Tor}

Our modified android clients host their own Tor service
when they install the application. When Alice and Bob \textit{first} establish a secure connection with each other, they automatically share the URL of their TOR services with each other. Later, when Alice receives a key update message from Bob, she contacts his Tor service to confirm whether the key update is legitimate. Bob responds with his correct key via the Tor service to help Alice prevent the attack.

Note that clients run TOR services online only when they update their keys (probably by re-installing the app). But they need to use the same Onion URL and need TOR service's private key for that. We use the messaging application's backup mechanism to store the TOR service's private key that can persist across re-installs or different phones. Whenever clients backup their messages, we every time append a message at the top with the label "TOR\_service\_details". Whenever user re-installs the app, they get an option to restore chat messages while logging into a previous account. In this step, we automatically retrieve the TOR service's private key and host the client's TOR service in the background. \\
The lack of a response is an indicator to Alice that the key update for Bob is an attack because if Bob does not update the key it will not turn on its TOR service.
\end{document}